\documentclass[12pt]{article}
\usepackage[utf8]{inputenc}

\usepackage{amsmath,amsfonts,amsthm,amssymb}
\usepackage{setspace}
\usepackage{fancyhdr}
\usepackage{lastpage}
\usepackage{extramarks}
\usepackage{chngpage}
\usepackage{soul,color}
\usepackage{graphicx,float,wrapfig}
\usepackage{listings}
\usepackage{tikz}
\usepackage{harpoon}
\usepackage[ruled]{algorithm2e}
\usepackage{hyperref}
\usepackage{bm}
\usepackage{indentfirst}
\usepackage[round]{natbib}
\usepackage{verbatim}
\usepackage{subfigure}
\usepackage[moderate]{savetrees}
\usepackage{times}
\usetikzlibrary{decorations.pathreplacing}

\hypersetup{colorlinks=true,
            citecolor=blue,
            linkcolor=blue,
            urlcolor=blue,
			bookmarksopen=true,
			pdfstartview={XYZ null null 1.00},
			pdfpagelayout={SinglePage}
}

\usepackage{paralist}

\newtheorem{theorem}{Theorem}
\newtheorem{lemma}{Lemma}
\newtheorem{proposition}{Proposition}

\newtheorem{definition}{Definition}

\newtheorem{example}{Example}

\newtheorem{remark}{Remark}

\usetikzlibrary{arrows,graphs}
\newcommand{\enabstractname}{Abstract}
\newcommand{\cnabstractname}{摘要}
\newenvironment{enabstract}{%
  \noindent\mbox{}\hfill{\bfseries \enabstractname}\hfill\mbox{}\par
  \vskip 2.5ex}{\par\vskip 2.5ex}

\graphicspath{{figure/}}                                 
\usepackage[a4paper]{geometry}                           
\begin{document}
\title{Strategy-proof Market Segmentation against \\ Price Discrimination\thanks{
First Version June 1, 2022. An earlier version of this paper was circulated under the title ``Stable Market Segmentation against Price Discrimination''. We are especially grateful to Dirk Bergemann, B\r{a}rd Hastard, Asher Wolinsky, and Jidong Zhou for the detailed comments, which helped us improve the manuscript. We thank Federico Echenique, Nima Haghpanah, Zhiguo He, Shota Ichihashi, Jin Li, Qingmin Liu, Ron Siegel, Kai Hao Yang, Junjie Zhou, and the audiences at various seminars and conferences for their helpful comments and suggestions. This paper is supported by the National Natural Science Foundation of China (No. 72473137, 72192801, 72103190), Beijing Natural Science Foundation (Z220001), and the ``fund for building world-class universities (disciplines) of Renmin University of China”. Any remaining errors are our own. \textit{Email}: Kuang (kuang@ruc.edu.cn); Li (sanxi@ruc.edu.cn); Liu (yi.liu.yl2859@yale.edu); Yu (yangyu@cup.edu.cn).}
}
\author{
    \textsc{Zhonghong Kuang} \\ \textit{Renmin U}
    \and \textsc{Sanxi Li} \\ \textit{Renmin U}
	\and \textsc{Yi Liu} \\ \textit{Yale}
	\and \textsc{Yang Yu} \\ \textit{CUP}
}
\maketitle

\begin{enabstract}
\noindent Data regulations increasingly enable consumers to switch among market segments, making segmentation an endogenous outcome of strategic interaction. We study a model in which consumers choose segments before a monopolist sets segment-specific prices, and define a segmentation as strategy-proof when no consumer with positive measure can profitably deviate. Our main result provides a complete welfare characterization: in every strategy-proof segmentation, producer surplus is pinned at the uniform monopoly profit, consumer surplus ranges from the uniform monopoly level to the buyer-optimal level, and every consumer is weakly better off. We construct strategy-proof segmentations attaining every feasible outcome in this range. A finite-consumer model microfounds our solution concept, with equilibrium outcomes converging to our characterization as the population grows large.

\vspace{1ex}
\noindent\textbf{Keywords:} Price discrimination, Market segmentation, Strategic consumer, Welfare;

\noindent\textbf{JEL Classification:} D42, D83, L12
\end{enabstract}

\newpage

\section{Introduction}

Online sellers can leverage consumer data for market segmentation and price discrimination, potentially harming consumer welfare. A straightforward remedy, widely discussed by scholars and competition authorities, is to prohibit price discrimination altogether. Recent privacy protection advancements such as the GDPR offer an alternative approach: enabling consumers to strategically switch among market segments, which can enhance consumer surplus. A prerequisite for such mobility, \emph{the Right to be Forgotten}, is already legally mandated.\footnote{All iOS apps on the App Store are required to offer an in-app account deletion option by July 30, 2022 (see \url{https://developer.apple.com/support/offering-account-deletion-in-your-app}). Moreover, all iOS apps should let people use them without a login (see App Store Review Guideline 5.1.1(v) in \url{https://developer.apple.com/app-store/review/guidelines/\#data-collection-and-storage}).} Yet consumers are decentralized, self-interested, and individually insignificant. \emph{Can they nonetheless be effectively protected against a monopolist who engages in price discrimination?} To explore this question, we study an extreme scenario in which consumers face no costs when switching market segments, thereby providing a non-cooperative, decentralized foundation for market segmentation.

In our model, a monopolistic producer sells a homogeneous product to a unit-demand consumer population. Production exhibits constant marginal cost, normalized to zero, and each consumer's reservation price takes a value from a finite set. The timing is as follows. In the first stage, each consumer simultaneously selects a market segment to join. In the second stage, the producer observes the value distribution within each segment and sets a take-it-or-leave-it price for each. Finally, each consumer makes a purchase decision.

A key modeling challenge is to capture consumers who are both numerous and individually non-negligible. We address this by representing the consumer population as a continuum while defining a \emph{profitable unilateral deviation} as a shift by a consumer with infinitesimal positive measure to a different market segment that yields a strictly higher payoff. Crucially, such a deviation can alter prices: online sellers typically employ algorithmic pricing (see, e.g., \cite*{johnson2023platform} and \cite{shota2026algorithmic}), and consumers rationally anticipate this. We define a market segmentation as \emph{strategy-proof} if it satisfies two conditions: (i) the producer sets an optimal price in every market segment, including off-path ones, and (ii) no profitable unilateral deviation exists for any consumer.

We restrict the seller to third-degree price discrimination: a uniform price is set within each market segment. This restriction admits two interpretations. First, the seller may lack the granular data needed to discriminate at the individual level, observing only the type distribution within each segment. Second, even when individual valuations are observable, regulatory or legal constraints may prohibit perfect price discrimination. For instance, legal provisions in New York and California bar the use of certain personal characteristics (e.g., gender) for pricing purposes, even when these traits are correlated with willingness to pay. A further example is Steam, where product prices vary across regions based on users' IP addresses but are uniform within each region. Consumers can exploit this structure by using virtual private networks (VPNs) to access prices offered in lower-income countries. Here, technical constraints reinforce the restriction: producers can aggregate data only at the IP-address level, making individual-level price discrimination infeasible.

Our main result characterizes the welfare consequences of all strategy-proof market segmentations. When consumers strategically choose market segments before the producer sets prices, producer surplus is always fixed at the uniform monopoly profit, while consumer surplus ranges from the uniform monopoly level to the maximum attainable under arbitrary segmentation (the \emph{buyer-optimal} outcome). Moreover, no consumer faces a price above the uniform monopoly price, so all consumers are weakly better off.

We establish attainability by constructing strategy-proof segmentations for both extreme outcomes. For the uniform monopoly outcome, the argument is immediate: absent any segmentation, the producer charges the uniform monopoly price, and any deviation would require creating a new market, which is not profitable for any consumer. For the buyer-optimal outcome, we partition the aggregate market into \emph{extremal markets} in which every valuation in the support is an optimal price. The producer therefore charges the minimum valuation in each market.\footnote{Consider all markets whose optimal price equals the lowest willingness to pay. Since no deviation can lead to a lower price, consumers in these markets have no incentive to leave. Iterating this argument from the lowest-priced markets to the highest confirms that the construction is strategy-proof.} Prices in extremal markets are sensitive to entry, which is precisely what enables a strategy-proof segmentation to sustain different prices across markets.

Having established that both extreme outcomes are attainable, we construct a family of strategy-proof segmentations implementing every intermediate welfare outcome. A naive approach---taking convex combinations of strategy-proof segmentations---does not work, as convex combinations need not preserve strategy-proofness. Our construction instead starts from the buyer-optimal segmentation and gradually merges the highest-priced markets into a single market whose share is governed by a tuning parameter ranging from zero to one. At one extreme, the construction coincides with the buyer-optimal outcome; at the other, it yields the uniform monopoly outcome. Since consumer surplus varies continuously in this parameter, provided the critical market can be proportionally divided, every intermediate level of consumer surplus is attained.

Two natural alternatives exist for modeling the decentralized formation of market segmentation in a continuum economy. Under the first, the seller's pricing strategy in each segment depends only on the type distribution, rendering any individual deviation a measure-zero event with no price impact. This specification is too restrictive: all segments must charge the uniform monopoly price, so only the monopoly outcome is attainable. Under the second---subgame perfect equilibrium (SPE)---even a measure-zero deviation can alter prices. This specification is too permissive: the set of welfare outcomes under SPE generally contains that under strategy-proof market segmentation.

Our solution concept occupies the middle ground between these two extremes. To justify this position, we study a discrete-consumer model with a large but finite number of consumers, where each player has non-negligible mass. We characterize the Nash equilibrium of this game given the monopolist's optimal pricing and show that, as the number of consumers grows to infinity, the set of equilibrium welfare outcomes converges to precisely the set identified by our notion of strategy-proofness. This convergence result provides a micro-foundation for our solution concept as the appropriate formalization of decentralized consumer behavior.

Our work springs from two strands of literature. First, our paper primarily belongs to the literature studying the welfare consequences of price discrimination (e.g., \cite*{aguirre2010monopoly} and \cite{cowan2016welfare}).
In a seminal paper, \cite*{bergemann2015limits} characterize all possible welfare consequences in third-degree price discrimination with exogenous segmentation. The shaded triangle in \autoref{fig:sur_tri} depicts all available surplus pairs, in which $A$ marks the uniform monopoly outcome and $C$ marks the buyer-optimal segmentation.\footnote{Point $B$ marks the first-degree price discrimination outcome, in which consumers with the same valuations are grouped. Point $D$ marks the outcome where social welfare is minimized.} The line segment between $A$ and $C$ represents the set of possible welfare pairs in our results. \cite{ichihashi2023data} investigate how the surplus triangle shrinks when the producer has private information unknown to the central designer.

\begin{figure}[!htb]
\centering
\begin{minipage}[t]{0.45\linewidth}
\begin{tikzpicture}[scale=5]
\draw[line width=1.5pt](0,0)--(0.8,0)node[below]at(0.5,0){Consumer surplus ($u$)};
\draw[line width=1.5pt](0,0)--(0,1)node[left]at(0,0.5){\rotatebox{90}{Producer surplus ($\pi$)}};
\draw[line width=1.5pt](0,0.2)--(0,0.9)--(0.7,0.2)--(0,0.2);
\fill[fill opacity=0.2](0,0.2)--(0,0.9)--(0.7,0.2);
\fill (0,0.2) circle (0.02)node[above]at(0.05,0.2){$D$};
\fill[color = red] (0.7,0.2) circle (0.02)node[above]{$C$};
\fill[color = red] (0.35,0.2) circle (0.02)node[above]{$A$};
\fill (0,0.9) circle (0.02)node[above]at(0.05,0.9){$B$};
\draw[line width=1.5pt,color=red] (0.35,0.2) -- (0.7,0.2);
\end{tikzpicture}
\end{minipage}
\caption{Surplus Triangle}\label{fig:sur_tri}
\end{figure}
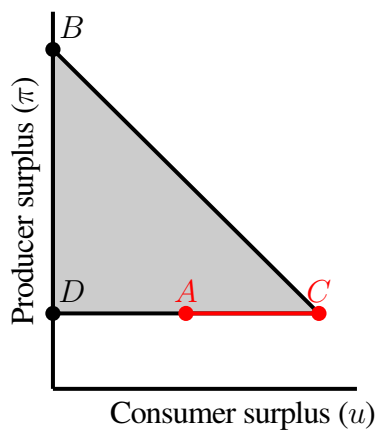

The market segmentation problem analyzed in \cite*{bergemann2015limits} can also be viewed as an information design problem with a single consumer. Then, \cite{roesler2017buyer} study how an uninformed buyer, with a known prior value distribution, learns the distribution. \cite{condorelli2020information} further endogenize the selection of value distribution. These studies mostly focus on settings with a single buyer and a homogeneous product, which is not suitable for endogenizing the market segmentation. Hence, we incorporate a large number of strategic consumers, which complements previous studies.\footnote{See \cite{ichihashi2020online,deb2021multi,haghpanah2019pareto,haghpanah2022The} for examples of bilateral trade models with multiple products and \cite*{armstrong2022consumer,belleflamme2020competitive,elliott2019market,chen2020Competitive,rhodes2023personalized} for examples with multiple strategic and competitive producers.}

Second, this paper is also linked to the literature on strategic consumers. In a cooperative setting, \cite{haghpanah2023A} study the formation of market segmentation by consumers, whereas our paper considers the formation as the equilibrium of a non-cooperative game. As a result, the markets they analyze must be efficient, the producer surplus may arise, and the outcome may not be Pareto-improving compared with the aggregate market. In contrast, the market segmentations analyzed in our paper may be inefficient, the producer surplus is fixed, and the outcome is always Pareto-improving. 

One mainstream of literature assumes repeated purchases from a monopolist. In such \emph{behavior-based price discrimination} scenarios, the monopolist obtains information about consumers through their purchase history, and consumers act myopically or strategically against the monopolist \citep{fudenberg2000customer,gehrig2007information,shen2018behavior,bonatti2020consumer}.\footnote{See the detailed survey conducted by \cite{fudenberg2006behavior}.}
\cite{acquisti2005conditioning} argues that purchase history may harm consumers in later periods, which, in turn, forces consumers to protect their privacy.
\cite{chen2009dynamic} consider strategic consumers seeking the lowest price when facing dynamic pricing. Our paper differs structurally from those studies, as strategic consumers in our model make only a single purchase decision, while their strategic feature arises from their interaction with other consumers.

The strategic concerns of consumers are also present in their voluntary disclosure decisions before the purchase. For example, \cite*{ali2020voluntary} investigate the incentive to voluntarily disclose hard evidence (exact information or a range about personal preferences), which builds the market segmentation. \cite{hidir2021privacy} study the soft information about consumers' valuations and introduce the incentive-compatible market segmentation. 
In their settings, the strategic side of consumers is manifested by their decisions on their data, which indirectly influences market segmentation. While in ours, each consumer directly chooses the market segment.

The remainder of the paper proceeds as follows. \autoref{Section:Model} introduces the model. \autoref{Section:Result} presents our main result and its proof. \autoref{Section:Discussion} further discusses our solution concept and justifies it using a discrete-consumer model. Some technical proofs are collected in the Appendix.

\section{Model}\label{Section:Model}

A monopolistic producer sells a product to $N$ consumers, with a constant marginal cost normalized to zero. Since $N$ should be very large in the real world, we can approximate that the seller serves a continuum of consumers for convenience. Consumers are uniformly distributed within $[0,1]$, each with unit demand. Consumers' willingness-to-pay of the product is determined by a function $V:[0,1]\rightarrow\mathbb{R}_{+}$. The range of $V$ is finite and it is $\{v_1,v_2,\cdots,v_K\}$ with $v_{k-1}<v_{k}$. We use $\theta\in[0,1]$ to denote a representative consumer.


In this setting, a \emph{market} $X$ is a subset of $[0,1]$. We assume that there are $n$ market segments, indexed by $i = 1, \dots, n$, where $n \geq K$ is exogenously given. A \emph{market segmentation} is a collection of markets $\mathcal{P}=\{X_1, \dots, X_n\}$ such that $X_i \cap X_j = \emptyset$ for all $i \neq j$, and $\bigcup_{i=1}^n X_i = [0,1]$. Each set $X_i$ represents the group of consumers who choose segment $i$. We say that a consumer $\theta$ is assigned to segment $i$ if $\theta \in X_i$. The producer observes the composition of each segment and can set different prices across segments.

\paragraph{Timeline.} The timeline of the extensive-form game is summarized as follows. 

\begin{enumerate}
    \item Each consumer $\theta \in [0,1]$ chooses a segment $i \in \{1, \dots, n\}$.
    \item This induces a market segmentation $\{X_1, \dots, X_n\}$, where $X_i$ is the set of consumers assigned to segment $i$. The producer observes the composition of each segment and sets a price for each segment.
    \item Each consumer makes a purchase decision.
\end{enumerate}

\paragraph{Producer's Strategies.} In stage 2, the producer offers a take-it-or-leave-it price to maximize the profit in each market. We say $p^*\in \mathbb{R}_{+}$ is an \emph{optimal price} of market $X$ if 
$$
p^*\in\arg\max_{p\geq 0} \int_{X} p\mathbb{I}\{V(\theta)\geq p\}d\theta.
$$
Multiple optimal prices might exist for a market. Let $\phi$ denote the deterministic pricing strategy of the producer, which maps every market to a non-negative price. If $\phi(X)$ is an optimal price of market $X$ for any $X\subseteq[0,1]$, the strategy $\phi$ is called \emph{rational}.

\paragraph{Consumer's Strategies.} We only consider deterministic strategies of consumers. In stage 1, each consumer $\theta\in[0,1]$ independently chooses one of $n$ market segments to join. The segment that consumer $\theta$ joins is denoted by $\sigma(\theta)\in\{1,2,\cdots,n\}$. Given strategy profile $\sigma$, we say $\mathcal{P}$ is the induced market segmentation if $X_i=\{\theta|\sigma(\theta)=i\}$. In stage 3, consumer $\theta$ makes a purchase decision if the price offered by the producer $\phi(X_{\sigma(\theta)})$ is no larger than the willingness-to-pay $V(\theta)$. 

\paragraph{Payoffs.} Let $u,\pi,$ and $w$ denote the consumer surplus, producer surplus, and total surplus, respectively. Given producer's strategy $\phi$ and consumers' strategies $\sigma$, the producer surplus is $\sum_{i=1}^{n}\phi(X_i)\int_{X_i}\mathbb{I}\{V(\theta)\geq \phi(X_i)\}d\theta$. Meanwhile, the payoff of consumer $\theta$ is denoted by $u_{\theta}(\phi,\sigma)=\max\{V(\theta)-\phi(X_{\sigma(\theta)}),0\}$. Hence, the consumer surplus is $\sum_{i=1}^{n}\int_{X_i}u_{\theta}(\phi,\sigma)d\theta$; and the social welfare is $\sum_{i=1}^{n}\int_{X_i}V(\theta)\mathbb{I}\{V(\theta)\geq \phi(X_i)\}d\theta$. 

For the aggregate market, we generically assume that the optimal uniform price $v^*$ is unique.\footnote{If multiple optimal uniform prices exist, our analysis extends by defining $v^*$ as the largest among them. This choice is natural: a higher price minimizes consumer surplus, and the uniform monopoly consumer surplus serves as the lower bound in our characterization.} We assume $v^*>v_1$ to avoid trivial analyses. Under the uniform price monopoly, we use $u^*,\pi^*$ to denote the consumer surplus and the producer surplus, respectively. Let $\overline{w}$ denote the maximum social welfare, i.e. $\int_{0}^{1}V(\theta)d\theta$. 

\paragraph{Solution Concept.} 

We now describe how we model individual deviations in the continuum framework. The key challenge is that a single consumer has zero Lebesgue measure and therefore cannot affect prices. To address this, we draw on the correspondence between discrete and continuous settings. In a market with $N$ consumers, each individual holds a market share of $\varepsilon = \frac{1}{N}$: small but non-negligible. In the continuum limit, we represent an individual consumer as a set of consumers with infinitesimal but strictly positive measure.\footnote{In the discrete model (see \autoref{Section:Discrete}), the smallest set with a positive market share is a single consumer. The continuum analogue is a measurable set whose measure can be made arbitrarily small but remains strictly positive.} A \emph{unilateral deviation} then corresponds to such a set switching from one market segment to another.

We formalize this by defining a profitable deviation for a positive-measure set of consumers. The restriction to positive measure is essential: only such a deviation can alter the composition of market segments and, consequently, prices and surplus.

\begin{definition}\label{def:dev_set}
    Given a strategy profile $(\phi,\sigma)$, a set of consumers $D\subseteq [0,1]$ is a \textbf{profitable deviation set} if (i) $D$ has positive measure, and (ii) there exists an alternative consumer strategy $\sigma'$ such that
    \begin{itemize}
        \item $\sigma'(\theta)\neq \sigma(\theta)$ if and only if $\theta\in D$, and
        \item $u_{\theta}(\phi,\sigma')-c>u_{\theta}(\phi,\sigma)$ for all $\theta\in D$,
    \end{itemize}
    where $c\geq0$ represents the cost of deviation.
\end{definition}

The deviation cost is zero ($c=0$) throughout the main analysis; we extend to positive costs in \autoref{sssection:cost}. Since we adopt a non-cooperative approach, a deviation must strictly benefit every consumer in $D$.\footnote{This does not entail collective coordination. The set $D$ consists of consumers of the same type who face identical payoffs; it merely represents the positive measure assigned to a single individual in the continuum model.} We next define a profitable unilateral deviation to capture the notion of an individual's deviation.

\begin{definition}\label{def:UDev}
    A \textbf{profitable unilateral deviation} exists if for any $\varepsilon>0$, there exists a profitable deviation set with measure smaller than $\varepsilon$.
\end{definition}

The ``for any $\varepsilon > 0$'' quantifier ensures robustness for large population sizes: in a market with $N$ consumers, each individual's share is $1/N$, so requiring stability only above some fixed threshold $\delta$ would fail to capture individual deviations once $N > 1/\delta$.

We can now define our solution concept.

\begin{definition}\label{def:SP}
    A strategy profile $(\phi,\sigma)$ induces a \textbf{strategy-proof} market segmentation $\mathcal{P}$ if (i) $\phi$ is rational, and (ii) no profitable unilateral deviation exists.
\end{definition}

\section{The Main Result and the Proof}\label{Section:Result}

We first present the main result of this paper and then provide the proof.

\begin{theorem} \label{thm:SP_surplus}
There exists a strategy profile $(\phi,\sigma)$ that induces strategy-proof segmentation \textbf{if and only if} producer surplus $\pi = \pi^{*}$ and consumer surplus $u\in[u^{*},\overline{w}-\pi^{*}]$. No positive measure of consumers is worse off compared with the uniform monopoly outcome.
\end{theorem}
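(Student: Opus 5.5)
The proof splits into necessity and sufficiency.

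\textbf{Necessity.} The central claim is that in any strategy-proof segmentation every segment with positive measure charges a price at most $v^*$. Suppose instead that some segment $X_i$ charges $p_i>v^*$. Take a segment charging the highest price $\bar p$. Its consumers with value at least $\bar p$ earn surplus $V-\bar p$, and every other segment charges a weakly lower price. A small positive-measure set $D$ of high-value consumers with $V(\theta)>\bar p$ who leave for another segment $X_j$ will, for small enough $D$, face a price close to $p_j$ in the enlarged $X_j$. Two cases need care.
\begin{itemize}
\item If $p_j<\bar p$ is the unique optimum in $X_j$, then optimality of $p_j$ is preserved under small perturbations, and the deviation is profitable.
\item If $X_j$ has ties, rationality of $\phi$ alone does not pin down the new price. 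I would then exploit the freedom in choosing $D$ (its type and its destination) together with the fact that adding mass at value $\bar p$ or above raises revenue at high prices.
\end{itemize}

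The cleaner route is to argue directly on the aggregate: since $v^*$ is the unique uniform optimum, profit at any price above $v^*$ is strictly below profit at $v^*$. Segments charging above $v^*$ must therefore coexist with segments at or below $v^*$ into which high-value types can move. Once all segments charge at most $v^*$, every consumer with $V\ge v^*$ buys at a price no higher than $v^*$, which gives the Pareto statement. For producer surplus, $\pi\ge\pi^*$ holds in any segmentation because charging $v^*$ in every segment is feasible. The reverse inequality $\pi\le\pi^*$ should also follow from the price bound: if all prices are at most $v^*$ and each is optimal in its segment, I would compare segment by segment with the hypothetical profit at $v^*$. I suspect this step actually needs a stronger stability property, namely that each segment's price is also locally pinned, so that its optimal-price set is ``extremal'' in the sense of the introduction. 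I would try to prove that a strategy-proof segment charging $p_i<v^*$ must also have $v^*$ as an optimal price; otherwise low types above $p_i$ gain nothing while high types elsewhere would join. Given that, $\pi=\sum_i v^*\cdot\mu(X_i\cap\{V\ge v^*\})=\pi^*$.

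The consumer-surplus bounds then follow. We have $u=w-\pi^*\le\overline w-\pi^*$. Also, since every price is at most $v^*$, every consumer who buys under uniform pricing still buys at a weakly lower price, so $u\ge u^*$.

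\textbf{Sufficiency.} Three constructions are needed.
\begin{itemize}
\item \emph{Endpoint $u^*$.} Put everyone in one segment and set $\phi(X)=$ the largest optimal price of $X$. A small deviating set forms a tiny market, which the producer prices at the deviator's own value, so no one gains.
\item \emph{Endpoint $\overline w-\pi^*$.} Use the extremal-market decomposition of \cite{bergemann2015limits}, with at most $K\le n$ markets, and let $\phi$ pick the lowest optimal price on path. Off path, $\phi$ picks the highest optimal price whenever a set enters. Strategy-proofness is verified by induction from the lowest-priced market upward: entering a market with a lower price adds mass above that price, which makes a higher support point strictly optimal, so the deviator faces a weakly higher price.
\item \emph{Intermediate values.} Merge the top-priced extremal markets with a parameter $t\in[0,1]$, check that the merged market is itself stable, and apply continuity and the intermediate value theorem.
\end{itemize}

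\textbf{Main obstacle.} The hardest step is the necessity claim that producer surplus equals exactly $\pi^*$, that is, that every segment must have $v^*$ among its optimal prices. This requires a careful case analysis of deviations into segments whose optimal prices are tied.
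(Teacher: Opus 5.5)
Your necessity argument has a genuine gap: neither the bound $\phi(X_i)\le v^*$ nor $\pi=\pi^*$ is actually proved. In your first step you use the wrong deviators. Consumers with $V(\theta)>\bar p$ already earn positive surplus. The lower-priced segments that arise in strategy-proof segmentations have tied optimal prices; the extremal markets of the greedy construction are of this kind. When a small set of such consumers enters a tied segment, the largest tied optimal price not exceeding their value becomes strictly optimal. That price can be $\bar p$ or more, so the deviation need not pay and no contradiction arises. Your unique-optimum bullet only shows that lower-priced segments must have ties. The ``cleaner route'' through the aggregate is an assertion rather than an argument: nothing in it forces a high-value entrant to face a lower price. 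Moreover, as you suspect, even the bound that all prices are at most $v^*$ would not give $\pi\le\pi^*$. Take values $1,2,3$ of mass $1/3$ each, so $v^*=2$ and $\pi^*=4/3$. Split them into the $1$-types plus half the $3$-types (price $1$) and the $2$-types plus the other $3$-types (price $2$). All prices are $\le v^*$, yet $\pi=3/2$.

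The missing idea is to use the \emph{zero-surplus} consumers whose value equals the top price $\bar p=\max_i\phi(X_i)$. They exist because every optimal price of a nonempty market is a support point. Write $R_X(p)$ for the revenue of price $p$ in $X$. Suppose $\bar p$ were not optimal in some $X_i$ with $\phi(X_i)<\bar p$. Let $r=R_{X_i}(\phi(X_i))-R_{X_i}(\bar p)>0$, and move into $X_i$ a set $D$ of such consumers with $\mu(D)<\min\{\varepsilon,r/(\bar p-\phi(X_i))\}$. Then $\phi(X_i)$ still strictly beats $\bar p$. Revenue at prices above $\bar p$ is unchanged while revenue at $\phi(X_i)$ rises, so every optimal price of the new market lies below $\bar p$. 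The entrants therefore gain strictly, whatever tie-breaking $\phi$ uses. Hence $\bar p$ is optimal in every segment, and $\pi=\sum_i R_{X_i}(\bar p)=R_{[0,1]}(\bar p)\le\pi^*\le\pi$. This yields $\pi=\pi^*$ and makes $\bar p$ an optimal uniform price, so $\bar p=v^*$ by uniqueness. The order is thus the reverse of yours: the price bound, and with it the Pareto claim and $u\ge u^*$, is a consequence of $\pi=\pi^*$, not a route to it.

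Your sufficiency plan follows the paper's route: unsegmented market, greedy extremal segmentation, merging the top markets, continuity. Two things are missing from your stability check. First, it silently relies on the nested supports of the greedy segmentation, which put each entrant's own value in the support of every lower-priced market. An arbitrary buyer-optimal extremal decomposition can fail: a $2$-type entering a market supported on $\{1,3\}$ still pays $1$. Second, it must cover deviation sets that leave and enter several markets at once. The paper handles this by looking at the lowest-priced market that receives entrants.
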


Before presenting the full proof, we illustrate the result and its key intuitions using a two-type example. Consumers have either high ($v_H$) or low ($v_L$) willingness to pay, and we assume that $v_H$ is the unique optimal uniform price. Without loss of generality, the aggregate market is divided into two segments that consumers freely choose between. Let market 1 contain masses $\alpha$ and $\beta$ of high- and low-type consumers, respectively. If $v_L(\alpha+\beta)\geq \alpha v_H$, it is optimal for the producer to charge $v_L$ in market 1. Since $v_H$ is the uniform monopoly price, market 2 must contain a positive measure of high-type consumers.

Can a price of $v_L$ in market 1 be sustained? The key question is what deters a high-type consumer in market 2 from switching to market 1. When $v_L(\alpha+\beta)> \alpha v_H$, the inequality is strict, so a high-type consumer with infinitesimal measure can join market 1 without disturbing the price $v_L$---a profitable deviation. When $v_L(\alpha+\beta)= \alpha v_H$, however, any such entry tips the balance and triggers a price increase to $v_H$, eliminating the incentive to switch. This indifference condition is therefore necessary for any strategy-proof segmentation with price $v_L$ in market 1. Moreover, for a given $\beta$, it uniquely pins down the entire segmentation.

Two consequences follow. First, producer surplus equals the uniform monopoly profit, since $v_H$ remains an optimal price in market 1. Second, consumer surplus depends on $\beta$: when $\beta=0$, the outcome coincides with the uniform monopoly; when $\beta$ equals the total mass of low-type consumers, the buyer-optimal outcome is achieved. By varying $\beta$, every intermediate welfare outcome between these two extremes is attained.

Now we provide the detailed proof of our theorem.

\subsection*{Proof for Necessity}

For market $X$, let $\textbf{supp}\{X\}$ denote the \emph{support set} of the valuation distribution in market $X$. Formally, this support set comprises those valuations at which the measure of consumers exhibiting the corresponding willingness-to-pay is strictly positive.

In our illustrative example involving valuations $v_H$ and $v_L$, we demonstrate that if $v_L$ is optimal in a given market, then $v_H$ must also be optimal. This observation provides key insights into maintaining market segmentation with distinct pricing across different markets. Building on this, we generalize the condition as follows.

\paragraph{Indifference Condition.} If there exist two markets with different prices, $\phi(X_{i})<\phi(X_{j})$, then for all $v_k\in \mathbf{supp}\{X_{j}\}\cap\left(\phi(X_{i}),\phi(X_{j})\right]$, $v_k$ is also optimal in market $X_{i}$.
\bigskip

In the following, we prove that the \textbf{indifference condition} is necessary to induce a strategy-proof segmentation when the producer's strategy $\phi$ is rational.

\begin{lemma} \label{lemma:SPprice}
The strategy profile $(\phi,\sigma)$ induces a strategy-proof segmentation $\mathcal{P}=\{X_1,\cdots,X_n\}$ \textbf{only if} (i) $\phi$ is \textbf{rational}; and (ii) the \textbf{indifference condition} holds. 
\end{lemma}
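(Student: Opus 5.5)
Part (i) is immediate from Definition~\ref{def:SP}, so the work is in (ii). I will argue by contraposition. Suppose $\phi$ is rational, $\phi(X_i)<\phi(X_j)$, and some $v_k\in\mathbf{supp}\{X_j\}\cap(\phi(X_i),\phi(X_j)]$ is \emph{not} optimal in $X_i$. I will build, for every $\varepsilon>0$, a profitable deviation set of measure below $\varepsilon$ made of type-$v_k$ consumers who move from $X_j$ to $X_i$. This contradicts strategy-proofness. Such a set exists because $v_k$ lies in the support of $X_j$: the set $\{\theta\in X_j: V(\theta)=v_k\}$ has positive measure, so it contains subsets of every small positive measure $\delta$.

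The payoff comparison is straightforward once prices are controlled. Currently each such consumer gets $\max\{v_k-\phi(X_j),0\}=0$, since $v_k\le\phi(X_j)$. If after the move the price in the enlarged market $X_i\cup D$ is some $p<v_k$, each deviator gets $v_k-p>0$, so the deviation is strictly profitable. Because the cost is $c=0$, it then suffices to show that for all small $\delta$, every optimal price of $X_i\cup D$ lies strictly below $v_k$. Rationality forces $\phi(X_i\cup D)$ to be one of these optimal prices, whatever selection $\phi$ makes off path.

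The key step is a continuity (upper hemicontinuity) argument for the profit function. Optimal prices can be taken in $\{v_1,\dots,v_K\}$. Write $R_X(v_l)=v_l\,\mu(\{\theta\in X:V(\theta)\ge v_l\})$. Adding $D$ of measure $\delta$ changes each $R(v_l)$ by at most $v_K\delta$. Let $M=\max_l R_{X_i}(v_l)$, which is attained at $\phi(X_i)<v_k$. Each $v_l\ge v_k$ is non-optimal in $X_i$: for $v_k$ this holds by assumption, and for $v_l>v_k$ I need an additional argument. There is therefore a gap $g=M-\max_{l:v_l\ge v_k}R_{X_i}(v_l)>0$, and choosing $\delta<g/(2v_K)$ keeps every price $\ge v_k$ strictly suboptimal in $X_i\cup D$.

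The main obstacle is exactly those prices $v_l>v_k$, which may be optimal in $X_i$. I would handle them by a different choice of deviator. Take the \emph{largest} $v_{k'}\in\mathbf{supp}\{X_j\}\cap(\phi(X_i),\phi(X_j)]$ that is non-optimal in $X_i$. Then I would check whether moving type-$v_{k'}$ consumers can make a higher optimal price $v_l$ strictly preferred. Types $\le v_{l}$ move $R(v_l)$ and $R(\phi(X_i))$ by $v_l\delta$ and $\phi(X_i)\delta$ respectively, so $v_l$ would gain. Hence I instead expect the argument to work when no price above $v_k$ is optimal in $X_i$, and otherwise to reduce to a chain argument: a tie at a higher price $v_l$ with $v_l\in(\phi(X_i),\phi(X_j)]$ is itself what the condition requires. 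I would sort out this case analysis carefully, possibly restricting attention to the smallest offending $v_k$ above all optimal prices of $X_i$. This is where I expect the real difficulty.
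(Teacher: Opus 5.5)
Your overall structure matches the paper's: argue by contradiction, move a small set $D$ of type-$v_k$ consumers from $X_j$ into $X_i$, note that they currently earn zero, and show that every optimal price of $X_i\cup D$ lies strictly below $v_k$, so that rationality forces $\phi(X_i\cup D)<v_k$ whatever the off-path selection. However, the proof is not complete. You leave the prices $v_l>v_k$ unresolved, and the reasoning you sketch for them is wrong. A consumer of type $v_k$ does not buy at any price above $v_k$. So adding $D$ leaves $R(v_l)$ unchanged for every $v_l>v_k$; only revenues at prices $p\le v_k$ rise, each by $p\,\mu(D)$. Your sentence claiming that ``types $\le v_l$ move $R(v_l)$ by $v_l\delta$'' has the direction reversed: only types $\ge v_l$ affect $R(v_l)$. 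Hence the feared scenario, in which a higher price gains from the entry, cannot occur.

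The missing step is one line. Note first that $\phi(X_i)>0$, since $R_{X_i}(\phi(X_i))>R_{X_i}(v_k)\ge 0$. Then for any $p>v_k$,
\[
R_{X_i\cup D}(p)=R_{X_i}(p)\le R_{X_i}(\phi(X_i))<R_{X_i}(\phi(X_i))+\phi(X_i)\mu(D)=R_{X_i\cup D}(\phi(X_i)).
\]
So every price above $v_k$ is strictly dominated after entry, even if it was optimal in $X_i$: the entrants break any such tie in favour of $\phi(X_i)$. This is exactly how the paper disposes of these prices. Combined with your gap argument for $v_k$ itself, it finishes the proof; the paper's version of that gap argument uses $\mu(D)<r/(v_k-\phi(X_i))$. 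Your symmetric bound $|\Delta R|\le v_K\delta$ discards precisely the sign information needed here. That is why your gap $g$ can be zero, and why you were pushed toward a change of deviator and a chain argument that are unnecessary. Any non-optimal $v_k$ in $(\phi(X_i),\phi(X_j)]$ works as the deviating type.
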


\begin{proof}[Proof of \autoref{lemma:SPprice}]

Suppose $(\phi,\sigma)$ induces a strategy-proof segmentation $\mathcal{P}=\{X_{1},\cdots,X_{n}\}$. If $\phi(X_i)<\phi(X_{j})$ for markets $X_i$ and $X_j$, for any consumer in market $X_{j}$ whose value $v_k$ lies in $\left(\phi(X_i),\phi(X_j)\right]$, his payoff is zero. 

We prove this lemma by contradiction. Suppose $v_k$ is not an optimal price of market $X_i$, we define 
\begin{equation*}
    r=\phi(X_i)\int_{X_i}\mathbb{I}\{V(\theta)\geq\phi(X_i)\}d\theta - v_k\int_{X_i}\mathbb{I}\{V(\theta)\geq v_k\}d\theta>0.
\end{equation*} 
Then, for any $\varepsilon>0$, since the set $\{\theta\in X_j:V(\theta)=v_k\}$ has positive measure, we can choose a subset $D\subseteq \{\theta\in X_j:V(\theta)=v_k\}$ such that $0<\mu(D)<\min\{\varepsilon,\frac{r}{v_k-\phi(X_i)}\}$, where $\mu(D)$ is the measure of set $D$. Consider the following possible deviation
\begin{equation*}
\sigma'(\theta)=\begin{cases}
\sigma(\theta), & \theta \not\in D, \\
i, & \theta \in D.
\end{cases}
\end{equation*}
Consider the market $X_i'=X_i\cup D$ induced by strategy $\sigma'$. The difference of the revenue of pricing $\phi(X_i)$ and the revenue of pricing $v_k$ is $r+\phi(X_i)\mu(D)-v_k\mu(D)>0$. Thus, $v_k$ cannot be the optimal price of market $X_i'$. Since $D$ only contains consumers with valuation $v_k$, any $v>v_k$ cannot be the optimal price of market $X_i'$. Therefore, $\phi(X_i')<v_k$. 

This implies that deviation is profitable for all consumers in $D$, resulting in a contradiction. 
\end{proof}

Consider a strategy-proof segmentation $\mathcal{P}=\{X_{1},\cdots,X_{n}\}$ induced by strategy profile $(\phi,\sigma)$. \autoref{lemma:maxprice}, the key step for necessity, states that the maximum price across all markets ($\max\left\{\phi(X_i)\right\}_{i=1}^n$) must equal the uniform monopoly price $v^*$.

\begin{lemma}\label{lemma:maxprice}
For any strategy-proof segmentation $\mathcal{P}=\{X_1,\cdots,X_n\}$, $\max\left\{\phi(X_i)\right\}_{i=1,\textbf{supp}\{X_i\}\neq \emptyset}^n=v^*$. 
\end{lemma}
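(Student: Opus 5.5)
Let $p_{\max}=\max_i \phi(X_i)$ over nonempty markets and let $M$ be the set of markets charging $p_{\max}$. I will show $p_{\max}\ge v^*$ and $p_{\max}\le v^*$ separately. Both directions rely on a revenue accounting over the whole population, combined with the indifference condition of \autoref{lemma:SPprice}.

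\emph{Step 1: $p_{\max}\ge v^*$.} Suppose instead that $p_{\max}<v^*$. Take any market $X_i$ and any valuation $v_k\in\mathbf{supp}\{X_i\}$ with $v_k\ge v^*$. Since $\phi(X_i)\le p_{\max}<v^*\le v_k$, every consumer in $X_i$ with value $v_k$ buys.

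Now compare each market's revenue at its chosen price with its revenue at price $v^*$. Rationality gives
$$\phi(X_i)\,\mu\{\theta\in X_i:V(\theta)\ge\phi(X_i)\}\ \ge\ v^*\,\mu\{\theta\in X_i:V(\theta)\ge v^*\}.$$
Summing over $i$, the right side totals $\pi^*$. So total profit is at least $\pi^*$, which is consistent but gives no contradiction by itself.

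The contradiction has to come from strategy-proofness. Consider a consumer of value $v^*$ (or of the smallest value $\ge v^*$ in the aggregate support) in some market $X_j$, and let $X_i$ be any market with a lower price. The indifference condition makes $v_k$ optimal in $X_i$.

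The cleaner route is to look at the top-priced markets. In each $X_j\in M$, the price $p_{\max}$ is optimal. By the indifference condition, every $v_k\in \mathbf{supp}\{X_j\}\cap(\phi(X_i),p_{\max}]$ is optimal in every lower-priced market $X_i$.

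I will then build the aggregate revenue function $R(p)=\sum_i R_i(p)$, where $R_i(p)=p\,\mu\{\theta\in X_i:V(\theta)\ge p\}$. The aim is to show that $p_{\max}$ attains the maximum of $R$ over all prices, so $R(p_{\max})=\max_p R(p)$. Uniqueness of $v^*$ then forces $p_{\max}=v^*$, which covers both directions at once.

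\emph{Step 2 (main step): $R(p_{\max})\ge R(p)$ for every $p$.} For each $i$, define $R_i$ as above.

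\begin{itemize}
\item For $p\ge p_{\max}$: I will argue $R_i(p_{\max})\ge R_i(p)$ for every market. For $X_j\in M$ this follows from optimality. For a lower-priced $X_i$, I will show $p_{\max}$ is optimal in $X_i$, or at least that $R_i(p_{\max})\ge R_i(p)$ for $p>p_{\max}$. The tool is the indifference condition applied to $v_k=p_{\max}$, which lies in $\mathbf{supp}\{X_j\}$ whenever $X_j\in M$ has positive sales at its price.
\item For $p<p_{\max}$: I will use that $\phi(X_i)$ is optimal in each $X_i$. Since every $\phi(X_i)\le p_{\max}$, I will show $R_i(p_{\max})=R_i(\phi(X_i))\ge R_i(p)$ by establishing that $p_{\max}$ is itself optimal in each $X_i$.
\end{itemize}

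So the crux is the following claim: $p_{\max}$ is an optimal price in every market. For markets in $M$ this is immediate. For a lower-priced $X_i$, note that $p_{\max}$ is optimal in $X_j\in M$ and $R_j(p_{\max})>0$. If $p_{\max}=v_k\in\mathbf{supp}\{X_j\}$, the indifference condition gives optimality in $X_i$ directly.

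Otherwise $\mathbf{supp}\{X_j\}$ contains no mass exactly at $p_{\max}$. Then raising the price in $X_j$ to the next support point above strictly increases revenue, and if no such point exists revenue drops to zero. Either way this contradicts optimality of $p_{\max}$ in $X_j$, so $p_{\max}\in\mathbf{supp}\{X_j\}$.

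\emph{Step 3: conclude.} With $p_{\max}$ optimal in every market, $R(p_{\max})=\sum_i\max_p R_i(p)\ge\max_p R(p)$. Hence $p_{\max}$ maximizes aggregate revenue, and uniqueness of $v^*$ yields $p_{\max}=v^*$.

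The expected obstacle is the support argument in the crux claim. In particular, I must handle markets with empty or degenerate support. Restricting the maximum to markets with $\mathbf{supp}\{X_i\}\neq\emptyset$, as the statement does, is what makes $R_j(p_{\max})>0$ and hence makes the argument go through.
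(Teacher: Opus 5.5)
Your proposal is correct and follows essentially the same route as the paper: you show that $p_{\max}$ lies in $\mathbf{supp}\{X_j\}$ for a top-priced market, so by the indifference condition it is optimal in every market, and charging $p_{\max}$ uniformly then earns $\sum_i \max_p R_i(p)\ge \max_p R(p)=\pi^*$; uniqueness of $v^*$ finishes the argument, and the only cosmetic difference is that you prove $\pi\ge\pi^*$ directly instead of citing the surplus-triangle result. The exploratory Step 1 and the case split in Step 2 are unnecessary and can be deleted, since the crux claim together with Step 3 already yields both inequalities.
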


\begin{proof}[Proof of \autoref{lemma:maxprice}]

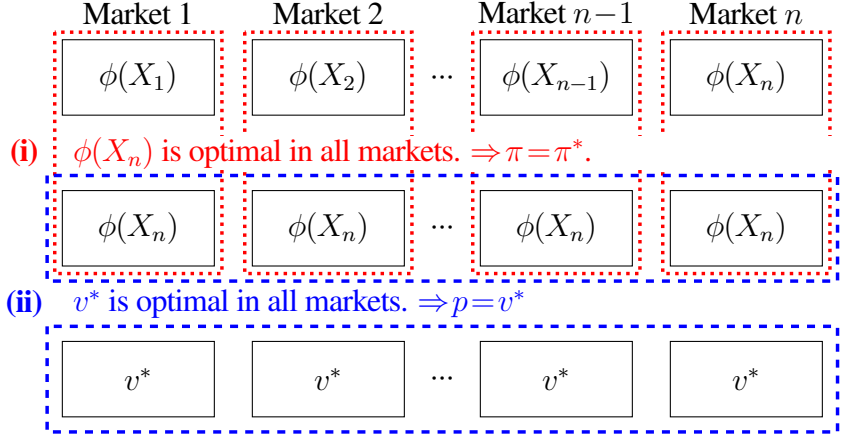
\begin{figure}[!htb]
        \centering
        \begin{tikzpicture}
        \draw (0,0) rectangle (2,-1) node at (1,0.35){Market 1}node at (1,-0.5){$\phi(X_1)$};
        \draw (2.5,0) rectangle (4.5,-1) node at (3.5,0.35){Market 2}node at (5,-0.5){$\cdots$}node at (3.5,-0.5){$\phi(X_2)$};
        \draw (5.5,0) rectangle (7.5,-1) node at (6.5,0.35){Market $n-1$}node at (6.5,-0.5){$\phi(X_{n-1})$};
        \draw (8,0) rectangle (10,-1) node at (9,0.35){Market $n$}node at (9,-0.5){$\phi(X_n)$};
        
        \draw (0,-2) rectangle (2,-3)node at (1,-2.5){$\phi(X_n)$};
        \draw (2.5,-2) rectangle (4.5,-3)node at (5,-2.5){$\cdots$}node at (3.5,-2.5){$\phi(X_n)$};
        \draw (5.5,-2) rectangle (7.5,-3)node at (6.5,-2.5){$\phi(X_n)$};
        \draw (8,-2) rectangle (10,-3)node at (9,-2.5){$\phi(X_n)$};
        
        \draw (0,-4) rectangle (2,-5)node at (1,-4.5){$v^*$};
        \draw (2.5,-4) rectangle (4.5,-5)node at (5,-4.5){$\cdots$}node at (3.5,-4.5){$v^*$};
        \draw (5.5,-4) rectangle (7.5,-5)node at (6.5,-4.5){$v^*$};
        \draw (8,-4) rectangle (10,-5)node at (9,-4.5){$v^*$};
        
        \draw[dotted,line width=1.3pt,color = red] (-0.1,0.1) rectangle (2.1,-3.1);
        \draw[dotted,line width=1.3pt,color = red] (2.4,0.1) rectangle (4.6,-3.1);
        \draw[dotted,line width=1.3pt,color = red] (5.4,0.1) rectangle (7.6,-3.1);
        \draw[dotted,line width=1.3pt,color = red] (7.9,0.1) rectangle (10.1,-3.1);
        
        \draw[dashed,line width=1.3pt,color = blue] (-0.2,-1.8) rectangle (10.2,-3.2);
        \draw[dashed,line width=1.3pt,color = blue] (-0.2,-3.8) rectangle (10.2,-5.2);
        
        \fill[color = white] (0.1,-1.28) rectangle (11,-1.72);
        \draw[color = red] node at(-0.5,-1.5){\textbf{(i)}}node at(0,-1.5)[right]{$\phi(X_{n})$ is optimal in all markets. $\Rightarrow\pi=\pi^*$.};
        \draw[color = blue] node at(-0.5,-3.5){\textbf{(ii)}}node at(0,-3.5)[right]{$v^*$ is optimal in all markets. $\Rightarrow p=v^*$};
        
        \end{tikzpicture}
        \caption{Why $\max\left\{\phi(X_i)\right\}_{i=1}^n=v^*$?}
        \label{fig:v*}
    \end{figure}

\autoref{fig:v*} graphically illustrates the main logic to prove \autoref{lemma:maxprice}. The three panels represent three different pricing strategies. In the upper panel, the producer charges $\phi(X_{i})$ in market $i$. We assume $\phi(X_{1})\leq \phi(X_{2})\leq\cdots\leq \phi(X_{n})$, which is without loss of generality by relabeling. The producer surplus, in this case, is denoted by $\pi$. In the middle panel, the producer charges a uniform price $\phi(X_{n})$ in all markets. Denote the producer surplus in this case by $\pi'$. In the lower panel, the producer charges the optimal uniform price $v^*$ in all markets and obtains the uniform monopoly profit $\pi^*$. 

We first argue that the three pricing strategies yield the same producer surplus. By \autoref{lemma:SPprice} and noticing that $\phi(X_{n})\in\textbf{supp}\{X_n\}$, $\phi(X_{n})$ should be optimal in every market. Thus, the pricing strategies in the upper and middle panels result in the same producer surplus, that is, $\pi=\pi'$. Evidently, $\pi'\leq\pi^*$, since the latter is the uniform monopoly profit and the former is the profit by charging a uniform price $\phi(X_{n})$. Moreover, from the results of the surplus triangle by \cite*{bergemann2015limits}, we know that  $\pi^*\leq\pi$. Thus, we must have $\pi=\pi'=\pi^*$.

Finally, since $\pi'=\pi^*$, $\phi(X_{n})$ must be an optimal uniform price, suggesting that $v^*=\phi(X_{n})$ by the uniqueness of the optimal uniform price.
\end{proof}

\autoref{lemma:maxprice} reveals that strategic consumer mobility prevents the producer from pricing above $v^*$. All results shown in \autoref{thm:SP_surplus} are direct implications of \autoref{lemma:maxprice}.

\begin{itemize}
    \item \textbf{Producer surplus}. Since $v^*$ is optimal in all markets within segmentation $\mathcal{P}$, the producer surplus is unchanged if charging $v^*$ to all markets, resulting $\pi=\pi^*$.
    \item \textbf{Pareto improvement}. No positive measure of consumers face a higher price than $v^*$.
    \item \textbf{Consumer surplus}. By Pareto improvement, consumer surplus is at least the uniform monopoly outcome, $u\geq u^{*}$. Meanwhile, $u\leq\overline{w}-\pi^*$ by the surplus triangle.
\end{itemize}

\subsection*{Proof for Sufficiency}

We need to construct strategy-proof segmentations to realize every $(\pi^{*},u)$ with $u\in[u^{*},\overline{w}-\pi^*]$. For one extreme case, the unsegmented market can achieve $u=u^{*}$. The other extreme case ($u=\overline{w}-\pi^*$) is realized by greedy procedures we will explain immediately. During the remaining construction process, we only use the minimum optimal pricing rule $\phi^{\min}$ that maps every market to its minimum optimal price.

To simplify our further construction, we introduce a new perspective on market segmentation. We say $\{\textbf{x}_1,\cdots,\textbf{x}_n\}$ where $\textbf{x}_i\in\mathbb{R}^{K}$ is a \emph{canonical representation} of market segmentation $\mathcal{P}=\{X_1,\cdots,X_n\}$ if $x_{i,j}=\mu(\{\theta\in X_i:V(\theta)=v_j\})$ for all $i=1,\cdots,n$ and $j=1,2,\cdots,K$. It is obvious that for any $\{\textbf{x}_1,\cdots,\textbf{x}_n\}$, there is a strategy $\sigma$ that induces a market segmentation with canonical representation $\{\textbf{x}_1,\cdots,\textbf{x}_n\}$ if and only if $\sum_{i=1}^{n}x_{i,j}=\mu(\{\theta\in [0,1]:V(\theta)=v_j\})\triangleq x_{j}^*$. Therefore, in the following construction, we focus on the canonical representation of market segmentation where $\sum_{i=1}^{n}\textbf{x}_i=\textbf{x}^*$.

The constructive approach necessarily requires non-uniform pricing, indicating that multiple optimal prices may exist in some markets (\autoref{lemma:SPprice}). Since we seek to implement the extreme case with $u=\overline{w}-\pi^*$, we enable multiple optimal prices in an extreme way: For every market within the segmentation, the producer is indifferent between charging any price inside the support, which coincides with the segmentation based on \emph{extremal market} \citep*{bergemann2015limits}.

For a set of valuations $S\subseteq \{v_1,\cdots,v_K\}$, the extremal market $\textbf{x}^{S}(a)$ with total share being $a$, is defined as 
$$
x_i^{S}(a)=\left\{
\begin{matrix}
0 & v_i\not\in S \\
a\min S\left(\frac{1}{v_i}-\frac{1}{\xi(v_i,S)}\right) & v_i\in S
\end{matrix}
\right.
$$
where $\xi(v_i,S)$ denotes the smallest element in $S$ higher than $v_i$, and $1/\xi(\max S,S)=0$. We can verify that the producer is indifferent to charging any valuation that appears in an extremal market, resulting in a profit of $a\min S$. Hence, extremal markets are sensitive to entry and helpful for constructing strategy-proof segmentations.

With these good properties in mind, we generate markets iteratively. Recall that $\textbf{supp}$ denotes the support set of a distribution. First, pack as many consumers as possible into an extremal market $\mathbf{x}^{\textbf{supp}\{\mathbf{x}^*\}}(a_1)$ until running out of consumers with some valuation. The residual market is $\mathbf{x}^{(1)}=\mathbf{x}^*-\mathbf{x}^{\textbf{supp}\{\mathbf{x}^*\}}(a_1)$. Second, pack as many consumers as possible from $\mathbf{x}^{(1)}$ into an extremal market $\mathbf{x}^{\textbf{supp}\{\mathbf{x}^{(1)}\}}(a_2)$. The residual market is thus $\mathbf{x}^{(2)}=\mathbf{x}^{(1)}-\mathbf{x}^{\textbf{supp}\{\mathbf{x}^{(1)}\}}(a_2)$. In each round $k$, we will obtain an extremal market $\mathbf{x}^{\textbf{supp}\{\mathbf{x}^{(k-1)}\}}(a_k)$ and a residual market $\mathbf{x}^{(k)}$. 
This iteration is terminated when the remaining market becomes $\mathbf{0}$. Since $|\mathbf{supp}\{\mathbf{x}^{(i+1)}\}|<|\mathbf{supp}\{\mathbf{x}^{(i)}\}|$, there is at most $K$ iterations, and the resulting segmentation is called \emph{greedy segmentation} $\mathcal{P}^\text{Greedy}(\mathbf{x}^*)=\{\mathbf{x}^\text{Greedy}_1,\cdots,\mathbf{x}^\text{Greedy}_t,\textbf{0},\cdots,\textbf{0}\}$\footnote{Suppose the iteration lasts for $t$ rounds, the remaining $n-t\geq 0$ markets are supplemented by $\textbf{0}$.}. The greedy segmentation reaches $u=\overline{w}-\pi^*$. More importantly, the greedy segmentation is strategy-proof. 

\begin{remark}\label{remark:greedy}
The greedy segmentation $\mathcal{P}^{\emph{Greedy}}(\mathbf{x}^*)$ is strategy-proof.
\end{remark}

This remark holds according to the following \autoref{lem:min_sufficent}. Combining it with \autoref{lemma:SPprice}, we can derive a necessary and sufficient condition for a market to be strategy-proof under the producer strategy $\phi^{\min}$.

\begin{lemma}\label{lem:min_sufficent}
    The strategy profile $(\phi^{\min},\sigma)$ induces a strategy-proof segmentation $\mathcal{P}=\{X_1,\cdots,X_n\}$ \textbf{if} the \textbf{indifference condition} holds.
\end{lemma}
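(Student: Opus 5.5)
The plan is to argue by contradiction: suppose the indifference condition holds, yet a profitable unilateral deviation exists under $\phi^{\min}$. By Definition~\ref{def:UDev} we may then take a profitable deviation set $D$ (with deviation strategy $\sigma'$) of arbitrarily small measure, and we choose that measure below a threshold $\bar\varepsilon$ fixed in the first step. Write $p_i=\phi^{\min}(X_i)$, let $R_i(q)=q\,\mu(\{\theta\in X_i:V(\theta)\ge q\})$ be the revenue function of market $X_i$, and let $X_i'$ be the markets after the deviation. Since $c=0$, a consumer $\theta$ moving from segment $j$ to segment $i$ strictly gains only if the new price $p_i'=\phi^{\min}(X_i')$ satisfies $p_i'<p_j$ and $p_i'<V(\theta)$.

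\emph{Step 1 (small perturbations do not create new optimal prices).} Only the finitely many prices $v_1,\dots,v_K$ are candidates for optimality. Hence there is a gap $g>0$ between the optimal revenue of $X_i$ and the best non-optimal revenue, over all $i$. Moving a set of measure $\mu(D)$ changes each $R_i(v_k)$ by at most $v_K\mu(D)$. With $\bar\varepsilon<g/(2v_K)$, every optimal price of $X_i'$ is therefore an optimal price of $X_i$. Under the minimum-optimal rule this gives $p_i'\ge p_i$. Consequently every deviator moves from some $X_j$ to some $X_i$ with $p_i<p_j$: deviators only flow toward strictly cheaper markets.

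\emph{Step 2 (choose a deviation flow whose destination receives only inflow).} Let $p$ be the smallest old price $p_j$ among the source markets of deviators, and consider deviators leaving markets priced at $p$. Their destinations $X_i$ have $p_i<p$. No consumer leaves such an $X_i$: a leaver would be a deviator from a source priced below $p$, contradicting the minimality of $p$. Fix such a destination $i$. All entrants into $X_i$, from whatever source, strictly gain, so each has value greater than $p_i'$ and less than or equal to the old price of its source. Let $v_m$ be the smallest value carried by a positive-measure group of entrants into $X_i$, coming from some $X_j$. Then $v_m\in \mathbf{supp}\{X_j\}\cap(p_i,p_j]$, because $v_m>p_i'\ge p_i$. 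By the indifference condition, $v_m$ is optimal in $X_i$.

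\emph{Step 3 (contradiction).} Since $X_i'$ is $X_i$ plus a pure inflow of total mass $M$, all of value at least $v_m$, we have $R_i'(v_m)=R_i(v_m)+v_mM$. For any $q<v_m$ we have $R_i'(q)\le R_i(q)+qM<R_i(v_m)+v_mM$, using the optimality of $v_m$ in $X_i$. Hence $p_i'\ge v_m$, so the entrants of value $v_m$ do not strictly gain, which is the desired contradiction. Thus no profitable deviation set of measure below $\bar\varepsilon$ exists, and since $\phi^{\min}$ is rational, the segmentation is strategy-proof.

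The main obstacle is the interaction of simultaneous inflows and outflows within one deviation set $D$. Outflows could in principle lower a market's price and make entry attractive. Steps 1 and 2 are designed to neutralize this: the smallness threshold forces all flows to go downward in price, and the minimal-source-price choice isolates a destination with pure inflow. This step, ensuring that $p$ and the destination are chosen so that no leaver exists, is the one I would check most carefully.
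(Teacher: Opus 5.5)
\textbf{Gap.} The inference that fails is in Step 2: ``each [entrant] has value greater than $p_i'$ and less than or equal to the old price of its source.'' Strict gain does not give the upper bound. As you note at the outset, a mover from $X_j$ to $X_i$ gains strictly if and only if $p_i'<\min\{V(\theta),p_j\}$. A consumer who already has positive surplus $V(\theta)-p_j>0$ gains from any move to a market whose new price is below $p_j$. So the minimal entrant value $v_m$ can exceed $p_j$. Then $v_m\notin(p_i,p_j]$, the indifference condition says nothing about it, and $v_m$ need not be optimal in $X_i$ (it need not even lie in $\mathbf{supp}\{X_i\}$). In that case the bound $p_i'\ge v_m$ in Step 3 is false in general. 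Take $X_i$ extremal on $\{v_1,v_2\}$ (priced at $v_1$) and $X_j$ priced at $v_2$ and containing $v_3$-consumers. This configuration satisfies the indifference condition. A small mass of those $v_3$-consumers entering $X_i$ raises its price to $v_2$, not to $v_3$. The move is still unprofitable, but only because the new price reaches $p_j$, which your argument never uses.

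\textbf{Repair.} Let $X_j$ be a source sending a positive mass of value-$v_m$ entrants into $X_i$, and set $q=\min\{v_m,p_j\}$. Strict gain of these entrants gives $p_i\le p_i'<q\le p_j$.
\begin{itemize}
\item If $v_m\le p_j$, your argument goes through unchanged.
\item If $v_m>p_j$, note that $X_j$ is nonempty, so $p_j=\phi^{\min}(X_j)\in\mathbf{supp}\{X_j\}\cap(p_i,p_j]$. The indifference condition then makes $p_j$ optimal in $X_i$. Since every entrant has value at least $v_m>q$, your Step 3 computation with $q$ in place of $v_m$ gives $p_i'\ge q$, a contradiction.
\end{itemize}
This case split is exactly the paper's device $p=\min\{\phi^{\min}(X_j),v_k\}$.

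\textbf{Comparison with the paper.} With this fix your proof follows the paper's. Your Step 1 is the paper's claim that $\phi^{\min}(X_i')$ is optimal in $X_i$. Your choice of a destination fed by the cheapest source plays the role of the paper's lowest-indexed market with positive inflow; both yield a market with pure inflow. Your Step 3 is the paper's closing revenue inequality.

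Minor bookkeeping: ``no consumer leaves'' should read ``no positive measure leaves,'' and $p$ should be the minimum over sources with positive-measure outflow.
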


\begin{proof}
See \autoref{app:proof-Lemma3}.
\end{proof}




For sufficiency of \autoref{thm:SP_surplus}, it remains to construct a family of strategy-proof segmentations that can achieve any intermediate $u\in(u^*,\overline{w}-\pi^*)$ based on greedy segmentation $\mathcal{P}^\text{Greedy}(\mathbf{x}^*)$. Given parameters $\alpha\in[0,1]$ and $k\in[1,t-1]$, we merge those markets that are generated later in greedy procedures to obtain a new segmentation, denoted by $\mathcal{P}^{\alpha,k}(\mathbf{x}^*)$: 
\begin{equation*}
\left\{\underbrace{\mathbf{x}_{1}^\text{Greedy},\cdots,\mathbf{x}_{k-1}^\text{Greedy},\alpha \mathbf{x}_{k}^\text{Greedy}}_\text{Extremal Markets},\underbrace{\sum_{j=k+1}^{t}\mathbf{x}_{j}^\text{Greedy}+(1-\alpha)\mathbf{x}_{k}^\text{Greedy}}_\text{Last Market},\underbrace{\textbf{0},\cdots,\textbf{0}}_\text{Supplementary Markets}\right\}.
\end{equation*}
Notably, the first $k$ markets are extremal. Meanwhile, the minimum optimal price of the last market is $v^*$, since (i) $v^*$ is optimal in every market generated by the greedy procedures, and (ii) $v^*$ is the minimum optimal price in $\mathbf{x}^\text{Greedy}_t$ (\autoref{lemma:maxprice}).\footnote{To see why the price of the last market is $v^*$, it suffices to show that (I) pricing any $v<v^*$ is strictly worse than $v^*$ and (II) pricing any $v>v^*$ is not better than $v^*$. (I) holds since $v^*$ is optimal in all markets and any $v<v^*$ is strictly worse than $v^*$ in $\mathbf{x}_t^\text{Greedy}$ since $v^*=\phi^{\min}(\mathbf{x}_t^\text{Greedy})$. (II) holds immediately from (i).} Moreover, it is not hard to show that $\mathcal{P}^{\alpha,k}(\mathbf{x}^*)$ is strategy-proof by \autoref{lem:min_sufficent}.

\begin{remark}\label{remark:constructed}
The constructed segmentation $\mathcal{P}^{\alpha,k}(\mathbf{x}^*)$ is strategy-proof.
\end{remark}

Let $\Psi$ denote the share of the last market, $\Psi=\left\|\sum_{j=k+1}^{t}\mathbf{x}_{j}^\text{Greedy}+(1-\alpha)\mathbf{x}_{k}^\text{Greedy}\right\|_1$. Given any $\Psi\in\left[\|\mathbf{x}^\text{Greedy}_t\|_1,1\right]$, there exists a unique $\mathcal{P}^{\alpha,k}(\mathbf{x}^*)$ such that the share of the last market is indeed $\Psi$. Meanwhile, the consumer surplus is continuous (and monotonic) in $\Psi$.\footnote{As $\Psi$ increases by $\text{d}\Psi$, those $\text{d}\Psi$ consumers, once in the second last market $\alpha\mathbf{x}_k^\text{Greedy}$, face price $v^*$ instead, while other consumers face the same price as before. Hence, the decline of consumer surplus is solely influenced by those changed consumers. Since changed consumers are distributed proportional to $\mathbf{x}_k^\text{Greedy}$, the loss of consumer surplus is bounded by $\text{d}\Psi\left(v^*-\phi^{\min}(\mathbf{x}_k^\text{Greedy})\right)$: Continuity holds.} Since the consumer surplus is $\overline{w}-\pi^*$ when $\Psi=\|\mathbf{x}^\text{Greedy}_t\|_1$ and $u^*$ when $\Psi=1$, all consumer surplus within $[u^*,\overline{w}-\pi^*]$ can be achieved. This ends the proof of \autoref{thm:SP_surplus}.

\section{Discussions}\label{Section:Discussion}

\subsection{Concerns for Information Leakage}

All strategy-proof segmentations constructed above share a common structure: a \emph{last market} priced at the uniform monopoly level, alongside several discount markets. This structure admits a natural interpretation. The last market serves as an \emph{anonymous market} for consumers who exercise the right to be forgotten. These consumers refuse to provide any information and therefore face the uniform monopoly price, while consumers in each discount market voluntarily reveal some information in exchange for preferential pricing.

We formalize this interpretation through a refinement of strategy-proof segmentation. In each segmentation, exactly one market is designated as the anonymous market, representing consumers who delete their accounts. In addition to the conditions in \autoref{def:SP}, we impose a \emph{no-logout} condition: any consumer who can purchase the good at a weakly lower price without logging in will do so to prevent information leakage. We restrict this condition to consumers who make a purchase because information leakage occurs primarily during transactions.

\begin{definition}[No-logout Condition]
Assume one market within the segmentation represents the \textbf{anonymous market}. For each consumer who purchases in a non-anonymous market, deviating to the anonymous market would result in a \textbf{strictly higher price}.
\end{definition}

After imposing the no-logout condition, every surviving strategy-proof segmentation contains exactly one market priced at $v^*$: the anonymous market. Moreover, this refinement is without loss. For any strategy-proof segmentation, merging all markets priced at $v^*$ into a single anonymous market yields a strategy-proof segmentation that satisfies the no-logout condition. Consequently, the set of attainable welfare outcomes is unchanged.

\subsection{Deviation Costs} \label{sssection:cost}

In this section, we assume that the deviation to another market incurs a cost of $c\geq0$ for each consumer. Alternatively, $c$ can capture the bounded rationality of consumers, in the sense that consumers cannot identify a profitable deviation unless that deviation improves by at least $c$. When $c=0$, the benchmark analysis applies.

We use $\textbf{SP}(c)$ to denote the possible set of surplus pairs $(\pi,u)$ that can be achieved by strategy-proof segmentation given cost $c$. By our main result, $\textbf{SP}(0)=\{(\pi^*,u):u\in[u^{*},\overline{w}-\pi^{*}]\}$. Moreover, the surplus triangle of \cite*{bergemann2015limits} can be interpreted as another extreme case: $\textbf{SP}(\infty)=\{(\pi,u):\pi\geq \pi^*,u\geq 0,\pi+u\leq \overline{w}\}$.

We can directly get the following proposition.

\begin{proposition}\label{prop:dev_cost_subset}
    $\textbf{\emph{SP}}(c_1)\subseteq \textbf{\emph{SP}}(c_2)$ when $0\leq c_1\leq c_2$.
\end{proposition}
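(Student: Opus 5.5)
The argument is a direct monotonicity check on the definitions: raising the deviation cost can only shrink the set of profitable deviations, so every strategy profile that is strategy-proof at cost $c_1$ stays strategy-proof at cost $c_2$. Since the producer's strategy $\phi$ and consumers' strategy $\sigma$ fix the surplus pair $(\pi,u)$ independently of $c$, membership in $\textbf{SP}(c_1)$ then carries over to $\textbf{SP}(c_2)$.

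Concretely, I would fix $(\pi,u)\in\textbf{SP}(c_1)$ and pick a profile $(\phi,\sigma)$ attaining it that is strategy-proof under cost $c_1$. Condition (i) of \autoref{def:SP}, rationality of $\phi$, does not involve $c$ at all, so it holds under $c_2$ as well. For condition (ii), I argue by contraposition. Suppose that under cost $c_2$ a profitable unilateral deviation exists in the sense of \autoref{def:UDev}. Then for every $\varepsilon>0$ there is a set $D$ with $0<\mu(D)<\varepsilon$ and an alternative strategy $\sigma'$, differing from $\sigma$ exactly on $D$, such that
\[
u_\theta(\phi,\sigma')-c_2>u_\theta(\phi,\sigma) \quad \text{for all } \theta\in D .
\]
Because $c_1\le c_2$, it follows that $u_\theta(\phi,\sigma')-c_1\ge u_\theta(\phi,\sigma')-c_2>u_\theta(\phi,\sigma)$. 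So the same $D$ and $\sigma'$ form a profitable deviation set under cost $c_1$ as well (\autoref{def:dev_set}). This holds for every $\varepsilon$, so a profitable unilateral deviation would exist under $c_1$, contradicting the choice of $(\phi,\sigma)$.

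Hence $(\phi,\sigma)$ is strategy-proof under $c_2$ and yields the same $(\pi,u)$, so $(\pi,u)\in\textbf{SP}(c_2)$. There is no real obstacle here. The only point to note is that payoffs $u_\theta$ and surpluses are defined independently of $c$, with the cost entering only through the deviation inequality; this is what lets the same witness profile work for both costs.
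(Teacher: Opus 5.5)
Your proof is correct and follows the same route as the paper. A profile that is strategy-proof at cost $c_1$ stays strategy-proof at $c_2$, because any profitable deviation set at $c_2$ is also one at $c_1$, and neither rationality nor the on-path surpluses depend on $c$; you simply spell out the $\varepsilon$-quantifier step of \autoref{def:UDev} that the paper's one-line argument leaves implicit.
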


The proof of \autoref{prop:dev_cost_subset} is straightforward. Consider any strategy-proof segmentation with a deviation cost of $c_1$. By definition, no unilateral deviation exists where every consumer can improve their payoff by more than $c_1$. Consequently, for any higher deviation cost $c_2\geq c_1$, the same segmentation remains strategy-proof, as the incentive for deviation is further diminished.

\autoref{prop:dev_cost_subset} implies that, when the deviation cost is strictly positive, the set of potential welfare outcomes nests our result in the costless benchmark, but is nested by the surplus triangle of \cite*{bergemann2015limits}. We use the following example to visualize this observation. 

\begin{example}\label{example-1}
    We consider an example such that there are three different reservation prices $\{1,2,3\}$ where $V(\theta)=1$ for $\theta\in[0,\frac{1}{3}]$, $V(\theta)=2$ for $\theta\in(\frac{1}{3},\frac{2}{3}]$, and $V(\theta)=3$ for $\theta\in(\frac{2}{3},1]$. We can calculate that $\pi^*=\frac{4}{3},u^*=\frac{1}{3}$, and $\overline{w}=2$.
\end{example}

In this example, since the valuations have a minimum difference of 1, the deviation cost will influence the strategy-proof segmentation only if $c\geq 1$. Therefore, by \autoref{thm:SP_surplus}, $\textbf{SP}(c)=\{(\pi,u):\pi=\frac{4}{3},u\in[\frac{1}{3},\frac{2}{3}]\}$ for $c\in[0,1)$. When $1\leq c< 2$, consumers with valuations 1 and 2 have no incentives to deviate. In this case, we only need to consider those consumers who have a valuation of $3$ and are in a market with a price 3. We can calculate that $\textbf{SP}(c)=\{(\pi,u):\frac{4}{3}\leq \pi\leq \frac{5}{3},0\leq u,u+\pi\leq 2\}$ for $c\in[1,2)$. When $c\geq 2$, no consumers have incentives to deviate, so all segmentation must be strategy-proof, resulting in $\textbf{SP}(c)=\{(\pi,u):\pi\geq \frac{4}{3},u\geq 0,\pi+u\leq 2\}$ for $c\geq2$. We plot these sets in \autoref{fig:sur_tri_subset} where $AC$ represents the surplus pairs when the deviation cost is smaller than 1, the trapezoid $EFCD$ represents the surplus pairs when the deviation cost $c\in[1,2)$, and the whole triangle $BCD$ represents the case where $c\geq 2$.

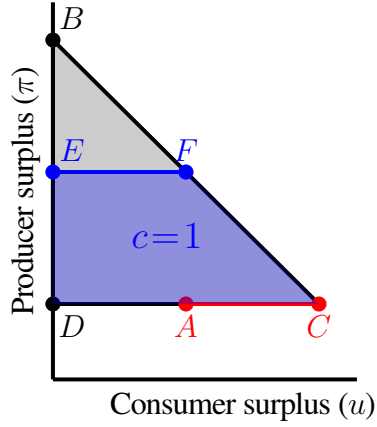
\begin{figure}[!htb]
\centering
\begin{minipage}[t]{0.45\linewidth}
\begin{tikzpicture}[scale=5]
\draw[line width=1.5pt](0,0)--(0.8,0)node[below]at(0.5,0){Consumer surplus ($u$)};
\draw[line width=1.5pt](0,0)--(0,1)node[left]at(0,0.5){\rotatebox{90}{Producer surplus ($\pi$)}};
\draw[line width=1.5pt](0,0.2)--(0,0.9)--(0.7,0.2)--(0,0.2);
\fill[fill opacity=0.2](0,0.2)--(0,0.9)--(0.7,0.2);
\fill (0,0.2) circle (0.02)node[below]at(0.05,0.2){$D$};
\fill[color = red] (0.7,0.2) circle (0.02)node[below]{$C$};
\fill[color = red] (0.35,0.2) circle (0.02)node[below]{$A$};
\fill (0,0.9) circle (0.02)node[above]at(0.05,0.9){$B$};
\draw[line width=1.5pt,color=red] (0.35,0.2) -- (0.7,0.2);

\fill[fill opacity=0.3,color=blue](0,0.2)--(0,0.55)--(0.35,0.55)--(0.7,0.2);

\fill[color = blue] (0,0.55) circle (0.02)node[above]at(0.05,0.55){$E$};

\fill[color = blue] (0.35,0.55) circle (0.02)node[above]{$F$};

\draw node[above,color = blue]at(0.3,0.32){\large{$c=1$}};

\draw[line width=1.5pt,color=blue](0,0.55)--(0.35,0.55);

\end{tikzpicture}
\end{minipage}
\caption{Welfare Consequences with Different Deviation Costs}\label{fig:sur_tri_subset}
\end{figure}

\subsection{Comparison with Subgame Perfect Equilibrium (SPE)}\label{Section:SPE}

In this section, we study subgame perfect equilibrium and show that strategy-proof market segmentation, as a solution concept, is a refinement of subgame perfect equilibrium. 

\begin{definition}\label{def:SPE}
    A strategy profile $(\phi,\sigma)$ is a \textbf{subgame perfect equilibrium} if (i) $\phi$ is rational; (ii) for any $\theta\in[0,1],k\in\{1,2,\cdots,n\}$, 
    $u_{\theta}(\phi,\sigma(\theta),\sigma_{-\theta})\geq u_{\theta}(\phi,k,\sigma_{-\theta})$ where $\sigma_{-\theta}$ is the strategy profile of all consumers except for consumer $\theta$.
\end{definition}

The distinction between strategy-proof segmentation and SPE lies in their respective approaches to modeling individual deviations. In the context of strategy-proof segmentation, an individual's deviation constitutes a minimal deviation set, which consequently alters the canonical representation of market segmentation. In contrast, under the framework of SPE, an individual deviation does not modify the canonical representation of market segmentation. Yet, it should be noted that when multiple optimal prices exist within a market, the prevailing price may vary before and after a consumer's participation in one market.

In the following, we show that any strategy-proof segmentation $\mathcal{P}$ can be induced by a SPE, but not vice versa. Thus, strategy-proof segmentation can be regarded as a refinement of the SPE. 

\begin{proposition}\label{prop:SP_subset_SPE}
    For any strategy-proof market segmentation $\mathcal{P}$, there exists a subgame perfect equilibrium $(\phi,\sigma)$ that induces $\mathcal{P}$.
\end{proposition}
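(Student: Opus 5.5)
Take a strategy-proof segmentation $\mathcal{P}=\{X_1,\dots,X_n\}$ induced by some $(\phi,\sigma)$. The plan is to keep the same consumer strategy $\sigma$, so that $\mathcal{P}$ is still the induced segmentation, and to build a new rational pricing rule $\phi'$ under which no single consumer gains by switching segments. The key fact is that in the SPE of \autoref{def:SPE} a single consumer $\theta$ has measure zero. Her deviation therefore leaves the canonical representation of every segment unchanged, and only the literal set $X_i$ changes: one point is added or removed. This lets the producer punish off-path entry: $\phi'$ is evaluated on sets, so sets that differ on a null set may still receive different optimal prices whenever several prices are optimal in that canonical market.

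The construction of $\phi'$ goes as follows. On the on-path sets $X_1,\dots,X_n$ set $\phi'(X_i)=\phi(X_i)$. For any other set $Y$, let $\phi'(Y)$ be the \emph{maximum} optimal price of $Y$. This choice is rational because the optimal-price set depends only on the canonical representation, which is finite-dimensional, and it always has a maximum because it is a finite subset of $\{v_1,\dots,v_K\}\cup\{0\}$. Thus $\phi'$ is rational.

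It remains to check condition (ii) of \autoref{def:SPE}. Fix $\theta\in X_j$ and suppose she deviates to segment $i\neq j$. The new segment $i$ is $X_i\cup\{\theta\}$, whose canonical representation equals that of $X_i$. She therefore pays the maximum optimal price of $X_i$. I will show this price is at least $\min\{V(\theta),\phi(X_j)\}$, which makes the deviation unprofitable.
\begin{itemize}
\item If $\phi(X_i)\geq\phi(X_j)$, then the maximum optimal price is at least $\phi(X_i)\geq\phi(X_j)$, so she gains nothing.
\item If $\phi(X_i)<\phi(X_j)$, her current payoff is positive only if $V(\theta)>\phi(X_j)$. In that case the deviation can gain only if $\phi(X_j)$ fails to be optimal in $X_i$.
\end{itemize}
The second case is the main obstacle. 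Here I would invoke \autoref{lemma:maxprice}, which says $v^*$ is the maximal price and $v^*$ is optimal in every market. I would also use the indifference condition of \autoref{lemma:SPprice} applied to $v_k=\phi(X_j)$. This is valid because $\phi(X_j)\in\mathbf{supp}\{X_j\}$ when $X_j$ has positive measure, since an optimal price can be taken at a support point and, for rationality, must be one. The indifference condition then gives that $\phi(X_j)$ is optimal in $X_i$, so the maximum optimal price of $X_i$ is at least $\phi(X_j)$ and the deviation does not lower her price.

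If $V(\theta)\le\phi(X_j)$, her payoff is $0$, and after deviating she faces a price at least $\max\{\phi(X_i),\text{the max optimal price of }X_i\}$. That price is either at least $\phi(X_j)\geq V(\theta)$ or, when $\phi(X_i)<\phi(X_j)$, it is at least every $v_k\in\mathbf{supp}\{X_j\}\cap(\phi(X_i),\phi(X_j)]$ by the indifference condition, since these $v_k$ are optimal in $X_i$. Her value is such a $v_k$ when $V(\theta)>\phi(X_i)$, which holds for $\theta$ off a null set; when $V(\theta)\le\phi(X_i)$ she gets nothing anyway.

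Consumers whose valuation lies outside $\mathbf{supp}\{X_j\}$ form a null set. I would handle them by adjusting $\sigma$ on that null set, for example moving each such consumer to a segment whose price gives her maximal payoff. This adjustment leaves $\mathcal{P}$ unchanged up to canonical representation, and presumably the segmentation is identified through that representation. Finally, deviations to empty segments yield a singleton market, whose maximum optimal price is $v_K$, so these are never profitable.
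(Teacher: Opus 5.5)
The core mechanism is sound, but the proof as written breaks on null sets, and the remedy you propose for them is not allowed.

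First, in a null market every $p\ge 0$ is optimal. So the ``maximum optimal price'' does not exist there; in particular a singleton's is undefined, not $v_K$. Null off-path sets must be assigned a price by hand, e.g.\ $v_K$.

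Second, and this is the step that actually fails, you keep $\phi'(X_j)=\phi(X_j)$ on a nonempty null segment $X_j$. There this price is arbitrary, and your key fact $\phi(X_j)\in\mathbf{supp}\{X_j\}$ is unavailable. Suppose $v^*<v_K$ and take $\mathcal{P}=\{[0,1]\setminus\{\theta_0\},\{\theta_0\},\emptyset,\ldots\}$ with $V(\theta_0)=v_K$, $\phi([0,1]\setminus\{\theta_0\})=v^*$ and $\phi(\{\theta_0\})=v_K$.
\begin{itemize}
\item This $\mathcal{P}$ is strategy-proof. Any positive-measure deviation set sends a positive measure of consumers from segment $1$ into some segment $k\neq 1$. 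That segment's new market consists, up to a null set, of deviators only. Its lowest-type deviators therefore pay at least their value and do not strictly gain.
\item Yet under your $\phi'$, consumer $\theta_0$ earns $0$. By moving to segment $1$ she faces the maximum optimal price of $[0,1]$, namely $v^*$, and earns $v_K-v^*>0$.
\end{itemize}

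Re-assigning such consumers by changing $\sigma$ is not a legitimate fix. The proposition asks for an SPE inducing $\mathcal{P}$ itself, not some segmentation with the same canonical representation; the paper keeps $\sigma(\theta)=k$ for $\theta\in X_k$. Re-price instead: set $\phi'(X_j)=0$ on null on-path segments. This is rational, since every price is optimal there, and it is exactly what the paper's $\phi^{\min}$ does. Members of these segments then obtain $V(\theta)$ and cannot gain, while entry into null segments is deterred by the price $v_K$.

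With this patch your argument is correct and takes a genuinely different route from the paper's.
\begin{itemize}
\item The paper prices on-path segments by $\phi^{\min}$. It prices a single entrant's market $X_i\cup\{\theta\}$ by $\lim_{\varepsilon\to0^+}\phi^{\min}(\mathbf{x}_i+\varepsilon\mathbf{e}_k)$, so a one-consumer deviation faces the same price as an infinitesimal positive-measure deviation in the sense of \autoref{def:UDev}, and strategy-proofness is invoked directly.
\item You instead impose the harshest rational punishment and use only the indifference condition of \autoref{lemma:SPprice}.
\end{itemize}

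Your two cases in fact show that, for every positive-measure $X_j$, the maximum optimal price of $X_i$ is at least $\phi(X_j)$. Case 1 gives this trivially. Case 2 gives it because the indifference condition at $v_k=\phi(X_j)$ does not involve $V(\theta)$. Hence no $\theta\in X_j$ can gain, whatever her valuation, and your separate treatment of $V(\theta)\le\phi(X_j)$ and of off-support consumers is superfluous.

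This robustness is what the maximal punishment buys. Under the paper's limit rule, a lone $v_2$-consumer placed in a segment of $v_3$'s is exposed. If she enters a segment whose optimal prices are exactly $\{v_1,v_3\}$, she is charged $v_1$ and gains. The paper's rule, in turn, makes transparent why strategy-proofness refines SPE, and through $\phi^{\min}$ it automatically disposes of null on-path segments.
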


\begin{proof}[Proof of \autoref{prop:SP_subset_SPE}]
Given a strategy-proof segmentation $\mathcal{P}=(X_1,\cdots,X_n)$, we will construct a SPE $(\sigma,\phi)$ such that the induced segmentation is $\mathcal{P}$. 

Consumers' strategy $\sigma$ satisfies that $\sigma(\theta)=k$ where $\theta\in X_k$. The producer's strategy $\phi$ is defined as follows. First, $\phi(X_i)\triangleq\phi^{\min}(X_i)$ for all $i$. Second, for a consumer $\theta\notin X_i$ with $V(\theta)=v_k$, we have $\phi(X_i\cup\{\theta\})\triangleq\lim_{\varepsilon\rightarrow 0^+}\phi^{\min}(\mathbf{x}_i+\varepsilon \mathbf{e}_{k})$ where $\phi^{\min}(\mathbf{x})$ is the minimum optimal price of the market whose canonical representation is $\mathbf{x}$ and $\mathbf{e}_k$ is the $K$-dimensional vector that only the $k$-th component is non-zero and is equal to one. Third, for any other market $X$, $\phi(X)$ is any one of the optimal prices of market $X$. 

Since $\lim_{\varepsilon\rightarrow 0^+}\phi^{\min}(\mathbf{x}_i+\varepsilon \mathbf{e}_{k})$ is an optimal price of market $X_i$, $\phi$ is rational. Since $\phi$ takes an individual's deviation as a deviation set with infinitesimal size, once $\mathcal{P}$ is a strategy-proof segmentation, for any consumer $\theta$, it cannot deviate to another market to obtain a higher payoff. 
\end{proof}

However, a SPE might induce a market segmentation that is not strategy-proof. Using \autoref{example-1}, we find a SPE under which the producer surplus is greater than $\pi^*$, as follows. 

Consider the segmentation $\mathcal{P}=(X_1,X_2,\mathbf{0})$ with $X_1=[0,\frac{1}{3}]\cup(\frac{5}{6},1]$ and $X_2=(\frac{1}{3},\frac{5}{6}]$. The producer's strategy is $\phi(X_1)=1,\phi(X_2)=2$ and $\phi(X)$ is the maximum optimal price of $X$ for markets $X\subseteq[0,1]$ other than $X_1,X_2$. Clearly, we can verify that both $1$ and $3$ are the optimal prices of market $X_1$, while $2$ is the unique optimal price of market $X_2$. By \autoref{lemma:SPprice}, $\mathcal{P}$ is not strategy-proof. However, no consumers in market $X_2$ have an incentive to deviate to market $X_1$ since this change of $X_1$ will lead the price in market $X_1$ to be $3$. Therefore, under the strategy $\phi$, the segmentation survives under SPE. Now, the producer surplus is $\frac{3}{2}>\pi^*=\frac{4}{3}$.

\begin{remark}
The first-degree price discrimination outcome cannot be sustained in any SPE $(\phi,\sigma)$. Hence, the set of welfare outcomes attainable under SPE is a strict subset of the full surplus triangle of \cite*{bergemann2015limits}.
\end{remark}

\subsection{Discrete Consumers}\label{Section:Discrete}

Our baseline model represents the consumer population as a continuum. In this section, we consider a finite-consumer counterpart that directly captures consumers who are both numerous and individually non-negligible. The finite model introduces a natural discretization error: each consumer has mass $1/N$, so a unilateral deviation shifts a market's composition by exactly one grid point. We make this approximation error explicit and show that the set of equilibrium welfare outcomes in the discrete model converges to our baseline characterization as the number of consumers grows large.

Suppose there are $N$ consumers. Consumer $\ell$ has valuation $V_\ell\in V=\{v_1,\cdots,v_K\}$, where $0<v_1<\cdots<v_K$. Each consumer has a market share $1/N$. A market is represented by a vector $\mathbf{x}=(x_1,\cdots,x_K)\in \frac{1}{N}\mathbb{Z}_{+}^{K}$, where $x_k$ is the share of consumers with valuation $v_k$ in this market. Let $\mathbf{e}_k$ be the $k$-th unit vector. The aggregate market is denoted by $\mathbf{x}^{*,N}$, with $\sum_{k=1}^{K}x_k^{*,N}=1$.

For a nonempty market $\mathbf{x}$ and a price $p\in V$, define $R_{\mathbf{x}}(p)=p\sum_{k:v_k\geq p}x_k$. Because demands are constant between two adjacent valuations, an optimal price can be selected from $V$. Let $\Phi(\mathbf{x})=\arg\max_{p\in V}R_{\mathbf{x}}(p)$ and $\phi^{\min}(\mathbf{x})=\min\Phi(\mathbf{x})$. For the aggregate market, let
\[
\pi_N^*=\max_{p\in V}R_{\mathbf{x}^{*,N}}(p),\qquad
v_N^*\in\arg\max_{p\in V}R_{\mathbf{x}^{*,N}}(p)
\]
denote the uniform monopoly profit and a uniform monopoly price. When the maximizer is unique, $v_N^*$ denotes this unique maximizer. Let $u_N^*$ be the consumer surplus under the uniform monopoly price, and let $\overline{w}_N=\sum_{k=1}^{K}v_kx_k^{*,N}$.

A market segmentation in the finite economy is a finite collection $\{\mathbf{x}_1,\cdots,\mathbf{x}_m\}$ of nonempty markets such that $\sum_{i=1}^{m}\mathbf{x}_i=\mathbf{x}^{*,N}$. Empty markets can be ignored because a consumer entering an empty market alone faces her own valuation as the optimal price and obtains zero surplus. A rational pricing strategy $\phi$ satisfies $\phi(\mathbf{x})\in\Phi(\mathbf{x})$ for every nonempty market $\mathbf{x}$. We say the induced segmentation is \emph{strategy-proof} if no consumer can obtain a strictly higher payoff by moving alone to another market, taking into account the new optimal price in the market she enters.

For a rational strategy profile, write
\[
\pi_N=\sum_{i=1}^{m}R_{\mathbf{x}_i}(\phi(\mathbf{x}_i)),
\qquad
u_N=\sum_{i=1}^{m}\sum_{k=1}^{K}\max\{v_k-\phi(\mathbf{x}_i),0\}x_{i,k}.
\]

\begin{theorem}\label{thm:bound_discrete}
Suppose $N$ is large enough and the uniform monopoly price $v_N^*$ is unique. 

\noindent (i) Suppose $\{\mathbf{x}_1,\cdots,\mathbf{x}_m\}$ is strategy-proof, then
\[
\pi_N^*\leq \pi_N\leq \pi_N^*+\frac{m(v_K-v_1)}{N}.
\]
Moreover, every consumer is weakly better off than under the uniform monopoly outcome, and
\[
u_N^*\leq u_N\leq \overline{w}_N-\pi_N^* .
\]

\noindent (ii) Suppose $\mathbf{x}^{*,N}\rightarrow \mathbf{x}^*$. Every point in $\left\{(\pi,u):\pi=\pi^*,\ u\in[u^*,\overline{w}-\pi^*]\right\} $
is the limit of strategy-proof finite-economy outcomes: there is a sequence of strategy-proof segmentations for $\mathbf{x}^{*,N}$ whose welfare outcomes converge to that point.
\end{theorem}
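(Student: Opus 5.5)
Part (i) follows the continuum argument with a one-consumer discretization error. First I would prove a discrete indifference condition. Suppose $\phi(\mathbf{x}_i)<\phi(\mathbf{x}_j)$ and $v_k\in(\phi(\mathbf{x}_i),\phi(\mathbf{x}_j)]$ with $x_{j,k}>0$. A type-$k$ consumer in market $j$ earns zero. If she moves to market $i$ and the new market $\mathbf{x}_i+\frac1N\mathbf{e}_k$ still has an optimal price below $v_k$, she gains, which contradicts strategy-proofness. Adding one type-$k$ consumer raises $R(v_k)$ by exactly $v_k/N$ and raises $R(\phi(\mathbf{x}_i))$ by $\phi(\mathbf{x}_i)/N$. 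Prices above $v_k$ gain nothing, as in the proof of \autoref{lemma:SPprice}. So strategy-proofness forces
\[
R_{\mathbf{x}_i}(\phi(\mathbf{x}_i))-R_{\mathbf{x}_i}(v_k)\le \frac{v_k-\phi(\mathbf{x}_i)}{N}\le\frac{v_K-v_1}{N}.
\]
In words, every such $v_k$ is $\frac{v_K-v_1}{N}$-optimal in $\mathbf{x}_i$.

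Next I mimic \autoref{lemma:maxprice}. Let $\bar p=\max_i\phi(\mathbf{x}_i)$. Since $\bar p$ is a valuation in the support of the top market, the inequality above shows that charging $\bar p$ in every market loses at most $\frac{v_K-v_1}{N}$ per market. This gives $\pi_N-\frac{m(v_K-v_1)}{N}\le R_{\mathbf{x}^{*,N}}(\bar p)\le \pi_N^*$, which is the upper bound. The lower bound $\pi_N\ge\pi_N^*$ holds because segmentation cannot reduce optimal profit: charging $v_N^*$ everywhere is feasible.

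The main obstacle is showing $\bar p=v_N^*$, since only then is every consumer weakly better off and $u_N\ge u_N^*$. Profit alone gives only $R_{\mathbf{x}^{*,N}}(\bar p)\ge\pi_N^*-m(v_K-v_1)/N$. Two difficulties arise. The uniqueness gap of $v_N^*$ may shrink with $N$, and $m$ could grow. I would avoid the $m$-dependence with a different argument. Suppose $\bar p>v_N^*$. Then some market charging $\bar p$ contains type $v_N^*$ consumers, or else a market charging less than $v_N^*$ does. In either case I would use the one-consumer deviation argument directly, comparing $R(\bar p)$ and $R(v_N^*)$ market by market and summing the exact inequalities. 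Summing the one-consumer inequalities gives $R_{\mathbf{x}^{*,N}}(\bar p)\ge R_{\mathbf{x}^{*,N}}(v_N^*)$ minus a slack term. I would try to show the slack is absorbed because each deviation inequality is strict only through $1/N$ terms. "$N$ large enough" together with uniqueness (interpreted as a gap bounded below, which holds when $\mathbf{x}^{*,N}$ converges to a market with unique $v^*$) then yields the contradiction. Once $\bar p=v_N^*$, every price is at most $v_N^*$, so no consumer is worse off and $u_N\ge u_N^*$. Finally, $u_N\le \overline w_N-\pi_N\le\overline w_N-\pi_N^*$.

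For part (ii) I would discretize the continuum construction $\mathcal P^{\alpha,k}(\mathbf{x}^*)$ of \autoref{remark:constructed}. Given a target $(\pi^*,u)$, fix the continuum segmentation attaining it, which has a bounded number $\le K$ of markets. I would approximate each extremal market by a lattice market in $\frac1N\mathbb{Z}_+^K$ that is slightly perturbed so that the discrete indifference holds. Concretely, I would round each extremal market so that its minimum support value remains optimal and every higher support value is within $1/N$-order of optimal, satisfying the one-consumer inequality above. The residual from rounding goes into the last market, which is priced at $v_N^*$. Such residuals have mass $O(K/N)$, and $v_N^*=v^*$ for large $N$. I would then verify strategy-proofness with a discrete analogue of \autoref{lem:min_sufficent}, reasoning from lowest-priced markets upward under $\phi^{\min}$. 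Welfare converges since the market compositions converge and prices are locally constant.
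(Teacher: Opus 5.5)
\textbf{Part (i): the step $\bar p=v_N^*$ is not established.} Your profit bounds in (i) reproduce the paper's argument. A consumer with valuation $\bar p$ in the top market must not be able to lower the price of any market $i$ by entering it. This gives $R_{\mathbf{x}_i}(\phi(\mathbf{x}_i))\le R_{\mathbf{x}_i}(\bar p)+(v_K-v_1)/N$, and summing yields $\pi_N\le R_{\mathbf{x}^{*,N}}(\bar p)+m(v_K-v_1)/N\le\pi_N^*+m(v_K-v_1)/N$.

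The gap is the step you yourself call the main obstacle, $\bar p=v_N^*$. The Pareto claim and $u_N\ge u_N^*$ both rest on it. You set the direct argument aside in favor of an $m$-free one. That argument is never carried out (``I would try to show the slack is absorbed''). Its case split is also not exhaustive: all type-$v_N^*$ consumers could sit in markets priced in $[v_N^*,\bar p)$.

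\textbf{No $m$-free argument exists.} Let $V=\{1,3,9\}$ with shares $(0.5,0.4,0.1)$ and $N$ a multiple of $10$. Then $v_N^*=3$ is unique with gap $0.5$. Put every consumer in her own market, so $m=N$. A consumer entering a singleton of a different type creates a market whose revenue is $2\min\{v_j,v_k\}/N$ at the lower valuation and $\max\{v_j,v_k\}/N$ at the higher one. The latter is strictly larger for all three pairs, so the price jumps to $\max\{v_j,v_k\}$ and she still earns zero. She also earns zero in a same-type singleton or an empty market. The segmentation is therefore strategy-proof, yet type-$9$ consumers pay $9>v_N^*$ and $\pi_N=\overline{w}_N>\pi_N^*$. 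So ``$N$ large enough'' must be read relative to $m$ (e.g.\ $m$ bounded by the number of segments), and the $m$-dependence cannot be removed.

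\textbf{How to close (i).} With that reading, use the inequality you already derived: $R_{\mathbf{x}^{*,N}}(\bar p)\ge\pi_N^*-m(v_K-v_1)/N$. Once $\Gamma_N=\pi_N^*-\max_{p\neq v_N^*}R_{\mathbf{x}^{*,N}}(p)>m(v_K-v_1)/N$, any $\bar p\neq v_N^*$ would give $R_{\mathbf{x}^{*,N}}(\bar p)\le\pi_N^*-\Gamma_N$, a contradiction. This is exactly how the paper concludes.

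\textbf{Part (ii).} The paper only uses values of $N$ on which the continuum construction lies exactly on the $1/N$ grid. There exact indifference holds, and the argument of \autoref{lem:min_sufficent} carries over under $\phi^{\min}$.

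Your plan to round for every $N$ needs more than the necessary inequality from (i). An entrant of type $v_k$ raises $R(p)$ by $p/N$ for every $p\le v_k$. Blocking her under $\phi^{\min}$ requires every price below $q=\min\{v_k,\phi(\mathbf{x}_j)\}$ to be strictly beaten after entry by some price $\ge q$, not just $\phi(\mathbf{x}_i)$. This essentially requires revenue discrepancies between support prices in the rounded markets to be smaller than $\min_k(v_{k+1}-v_k)/N$, whereas naive rounding perturbs them by order $v_K/N$. You must also keep the residuals added to the last market from breaking the ties among the support prices of $\mathbf{x}_t^{\text{Greedy}}$ upward. Otherwise the last market's price rises above $v_N^*$.
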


\begin{proof}
See \autoref{app:proof-Theorem2}.
\end{proof}

\autoref{thm:bound_discrete} gives the finite-consumer analogue of \autoref{thm:SP_surplus}. Part (i) shows that any finite strategy-proof outcome lies within the discretization surplus $m(v_K-v_1)/N$ of the continuum necessity result, so the error vanishes when the number of markets is bounded. Part (ii) gives the limiting converse. Fix a continuum construction from the proof of \autoref{thm:SP_surplus}--the greedy segmentation or a merged segmentation $\mathcal{P}^{\alpha,k}(\mathbf{x}^*)$. If its canonical coordinates lie on the $1/N$ grid and aggregate to $\mathbf{x}^{*,N}$, the same representation is strategy-proof under $\phi^{\min}$. Otherwise, part (ii) only requires convergence, which is obtained by grid-admissible approximations along suitable values of $N$. Exact attainability at a fixed $N$ is therefore only an arithmetic issue.

\section{Concluding Remarks}

In this paper, we introduce a solution concept called strategy-proof market segmentation, designed to capture the fact that each consumer remains non-negligible when moving between segments, even in a large population. Two natural alternatives for modeling decentralized segmentation in a continuum economy are either too restrictive or too permissive: if pricing depends only on the type distribution, individual deviations have no price impact and only the monopoly outcome survives; under subgame-perfect equilibrium, even measure-zero deviations can alter prices, yielding a strictly larger set of welfare outcomes. Strategy-proof segmentation occupies the middle ground. Under any strategy-proof segmentation, producer surplus remains at the uniform monopoly level, consumer surplus is at least the uniform monopoly level, and some segmentations reach the efficient frontier. We justify this solution concept by showing that, in a discrete-consumer model, the set of Nash equilibrium welfare outcomes converges to precisely our characterization as the number of consumers grows large.

In an earlier working paper version, we also study group deviations, in which consumers from multiple markets and of different types simultaneously switch segments, provided that every member of the deviating group benefits strictly. Some strategy-proof segmentations do not survive this stronger stability requirement, but the set of attainable welfare outcomes remains unchanged. See \cite*{workingpaperversion} for details.

Our solution concept does face practical limitations. First, we assume a fully rational producer who can accurately infer the value distribution of each market segment and whose pricing algorithm responds instantaneously to even minor distributional changes. Second, we rely on the assumption that consumer valuations take finitely many values. This restriction is not merely technical: the indifference condition at the heart of our characterization requires that a small perturbation in a market's composition can discretely shift the optimal price. A finite type space guarantees discrete pricing, so that any entry by a consumer triggers a price jump. If valuations were instead drawn from a continuum, the revenue function would vary continuously in the type distribution, potentially eliminating this mechanism. Extending our results to continuous distributions remains an open question, though the discrete-consumer convergence result in \autoref{Section:Discrete} suggests that the finite-type framework is a natural modeling choice.

Technological advances are making these assumptions increasingly plausible. Modern Internet platforms can aggregate consumer data to estimate value distributions in real time, and their substantial computing power enables near-instantaneous algorithmic pricing responses. Platforms such as WeChat, which connects sellers and buyers through mini-programs, can gather and disseminate market information to facilitate efficient segmentation.\footnote{Tencent, the largest digital enterprise in China and the largest game company worldwide, derives the majority of its revenue from games and entertainment. It is therefore reasonable to view WeChat, Tencent's instant messaging service, as operating to benefit both sellers and buyers on its platform.} Large device manufacturers (e.g., Apple and Samsung), data brokers (e.g., Acxiom and Bloomberg), and data intermediaries (e.g., MiData, Salus Coop) could play a similar role.

\appendix

\section{Proof of \autoref{lem:min_sufficent}}\label{app:proof-Lemma3}

Assume that the strategy profile $(\phi^{\min},\sigma)$ satisfies the indifference condition. We prove that the market segmentation must be strategy-proof. To establish this result, we employ a proof by contradiction.

Suppose there exists a market segmentation $\mathcal{P}=\{X_1,\cdots,X_n\}$ that satisfies the indifference condition, yet admits a profitable unilateral deviation. Without loss of generality, we assume $\phi(X_{1})\leq \phi(X_{2})\leq\cdots\leq \phi(X_{n})$. Let $r$ denote the minimum revenue difference between using the optimal price and using a non-optimal price in any market segment. For any $\varepsilon\in\left(0,\frac{r}{v_K-v_1}\right)$, there must exist a deviation set $D(\varepsilon)$ with $\mu(D(\varepsilon))\in(0,\varepsilon)$ and corresponding strategy profile $\sigma'$ satisfying the conditions listed in \autoref{def:dev_set}. We assume that $X_1',\cdots,X_n'$ is the market segmentation induced by $\sigma'$. 

We first claim that $\phi^{\min}(X_i')$ must be an optimal price of market $X_i$. If not, pricing $\phi^{\min}(X_i)$ in market $X_i'$ must yield higher profit than pricing $\phi^{\min}(X_i')$ by at least $r-\mu(D)(\phi^{\min}(X_i')-\phi^{\min}(X_i))>0$. This implies that $\phi^{\min}(X_i')$ is not an optimal price of market $X_i'$, resulting in a contradiction. Since $\phi^{\min}(X_i')$ is an optimal price of $X_i$ and $\phi^{\min}(X_i)$ is the minimum optimal price, we have $\phi^{\min}(X_i) \leq \phi^{\min}(X_i')$. 

Let $L_i=D(\varepsilon)\cap X_i$ denote the set of consumers leaving $X_i$ and $J_i=D(\varepsilon)\cap X_i'$ denote the set of consumers joining this market. Thus, $X_i'=(X_i\setminus L_i)\cup J_i$. Consider a market with the smallest index $I$ such that $\mu(J_I)>0$. Since the price after deviation (i.e., $\phi^{\min}(X_I')$) will not decrease (i.e., $\phi^{\min}(X_I')\geq \phi^{\min}(X_I)$) and $I$ is the smallest index with $\mu(J_I)>0$, we must have $\mu(L_I)=0$; otherwise, consumers in $L_I$ must face a weakly higher price than $\phi^{\min}(X_I)$. 

Let $v_k=\min\textbf{supp}\{J_I\}$ and we consider one consumer $\theta$ with valuation $v_k$ coming from $X_j$, namely $\sigma(\theta)=j$. We must have $v_k>\phi^{\min}(X_I')\geq\phi^{\min}(X_I)$ to ensure that the consumer $\theta$ obtains a strictly positive payoff after the deviation. If $v_k\leq \phi^{\min}(X_j)$, then by the indifference condition, $v_k$ is an optimal price of $X_I$; otherwise (i.e., $v_k>\phi^{\min}(X_j)$), $\phi^{\min}(X_j)$ is an optimal price of $X_I$ and $\phi^{\min}(X_I')<\phi^{\min}(X_j)$ to ensure a profitable deviation. Thus, we know that $p=\min\{\phi^{\min}(X_j),v_k\}$ must be an optimal price of $X_I$ and it is strictly larger than $\phi^{\min}(X_I')$. Finally, a contradiction arises,
\begin{eqnarray*}
0&\geq&p\int_{X_I'}\mathbb{I}\{V(\theta)\geq p\}d\theta - \phi^{\min}(X_I')\int_{X_I'}\mathbb{I}\{V(\theta)\geq \phi^{\min}(X_I')\}d\theta \\
&=&(p-\phi^{\min}(X_I'))\mu(J_I)>0,
\end{eqnarray*}
where the former inequality is valid because $\phi^{\min}(X'_I)$ is an optimal price of market $X_I'$; meanwhile, the latter inequality holds because $p>\phi^{\min}(X_I')$ and $\mu(J_I)>0$. As for the equation, let $\Pi$ denote the revenue of pricing $p$ or $\phi^{\min}(X_I')$ in market $X_I$. Then, the revenue of pricing $p$ in market $X_I'$ is $\Pi+p\mu(J_I)$ while the revenue of pricing $\phi^{\min}(X_I')$ in market $X_I'$ is $\Pi+\phi^{\min}(X_I')\mu(J_I)$.

\section{Proof of \autoref{thm:bound_discrete}}\label{app:proof-Theorem2}

The lower bound is immediate. Since the producer chooses an optimal price in each market, $R_{\mathbf{x}_i}(\phi(\mathbf{x}_i))\geq R_{\mathbf{x}_i}(v_N^*)$ for every $i$. Summing over all markets gives $\pi_N\geq R_{\mathbf{x}^{*,N}}(v_N^*)=\pi_N^*$.

We now prove the upper bound. Without loss of generality, order the nonempty markets so that
\[
\phi(\mathbf{x}_1)\leq \phi(\mathbf{x}_2)\leq \cdots \leq \phi(\mathbf{x}_m).
\]
Let $p_m=\phi(\mathbf{x}_m)$. Since $p_m$ is an optimal price in the nonempty market $\mathbf{x}_m$, there is at least one consumer with valuation $p_m$ in market $m$; otherwise the producer could increase the price to the next valuation in the support without reducing demand. Let $h$ be such that $p_m=v_h$.

Consider such a consumer with valuation $p_m$. For any $i<m$, if she deviates to market $i$, the entered market becomes $\mathbf{x}_i+\frac{1}{N}\mathbf{e}_{h}$. To prevent this deviation from being profitable, the new price in market $i$ must be at least $p_m$, because the consumer obtains zero surplus in market $m$. It also cannot exceed $p_m$: if the new price were strictly above $p_m$, the entrant would not buy, and the old optimal price $\phi(\mathbf{x}_i)$ would yield strictly higher revenue after the entrant arrives. Hence $p_m$ is an optimal price of $\mathbf{x}_i+\frac{1}{N}\mathbf{e}_{h}$. For $i=m$, the inequality below is trivial.

Therefore, $R_{\mathbf{x}_i}(p_m)+\frac{p_m}{N}
\geq
R_{\mathbf{x}_i}(\phi(\mathbf{x}_i))+\frac{\phi(\mathbf{x}_i)}{N}$,
or equivalently,
\[
R_{\mathbf{x}_i}(\phi(\mathbf{x}_i))
\leq
R_{\mathbf{x}_i}(p_m)+\frac{p_m-\phi(\mathbf{x}_i)}{N}
\leq
R_{\mathbf{x}_i}(p_m)+\frac{v_K-v_1}{N}.
\]
Summing this inequality over $i=1,\cdots,m$, we obtain
\[
\pi_N
=\sum_{i=1}^{m}R_{\mathbf{x}_i}(\phi(\mathbf{x}_i))
\leq
\sum_{i=1}^{m}R_{\mathbf{x}_i}(p_m)+\frac{m(v_K-v_1)}{N}
=
R_{\mathbf{x}^{*,N}}(p_m)+\frac{m(v_K-v_1)}{N}
\leq
\pi_N^*+\frac{m(v_K-v_1)}{N}.
\]

The preceding upper-bound argument also implies
\[
R_{\mathbf{x}^{*,N}}(p_m)
\geq
\pi_N-\frac{m(v_K-v_1)}{N}
\geq
\pi_N^*-\frac{m(v_K-v_1)}{N}.
\]
When $N$ is large enough, we have
\[
\Gamma_N=\pi_N^*-\max_{p\in V\setminus\{v_N^*\}}R_{\mathbf{x}^{*,N}}(p)>\frac{m(v_K-v_1)}{N}.
\]
If $p_m\neq v_N^*$, then by definition of $\Gamma_N$,
\[
R_{\mathbf{x}^{*,N}}(p_m)\leq \pi_N^*-\Gamma_N
<
\pi_N^*-\frac{m(v_K-v_1)}{N},
\]
a contradiction. Hence $p_m=v_N^*$, so all market prices are weakly below $v_N^*$. Since every consumer faces a price weakly below the uniform monopoly price, each consumer's payoff is weakly higher than under uniform monopoly pricing. Hence $u_N\geq u_N^*$. The upper bound follows from $\pi_N\geq \pi_N^*$ and $u_N+\pi_N\leq\overline{w}_N$. 

For (ii), the convergence statement follows from the argument following \autoref{thm:bound_discrete}.

\bibliography{segmentation}
\bibliographystyle{aer}

\end{document}